\newcommand{\Tr}{\mathrm{ Tr }}
\newcommand{\norm}[1]{\left\lVert#1\right\rVert}
\newcommand{\ceil}[1]{\left \lceil #1 \right \rceil}
\newcommand{\floor}[1]{\left \lfloor #1 \right\rfloor}
\newcommand{\abs}[1]{\left\vert #1 \right \vert}
\newcommand{\dd}{\mathrm{d}}
\newcommand{\Max}{\mathrm{Max}}
\newcommand{\Min}{\mathrm{Min}}
\newcommand{\Var}{\mathrm{Var}}
\newcommand{\nmax}{n_{\mathrm{max}}}
\newcommand{\wmax}{\tau_{\mathrm{max}}}
\newcommand{\wmin}{\tau_{\mathrm{min}}}
\newcommand{\tol}{\mathrm{tol}}
\newcommand{\freq}{\mathrm{freq}}
\newcommand{\corr}{\mathrm{corr}}
\newcommand{\pass}{\mathtt{pass}}
\newcommand{\esec}{\varepsilon_\mathrm{sec}}
\newcommand{\ecor}{\varepsilon_\mathrm{cor}}
\newcommand{\ecom}{\varepsilon_\mathrm{com}}
\newcommand{\esmooth}{\varepsilon_s}
\newcommand{\eEAT}{\varepsilon_\mathrm{EA}}
\newcommand{\vb}[1]{\mathbf{#1}}
\newcommand{\vbf}[1]{\boldsymbol{#1}}
\newtheorem{thm}{Theorem}
\newtheorem{df}{Definition}
\newtheorem{lemma}{Lemma}
\newtheorem{corollary}{Corollary}
\newcounter{protcounter}
\begin{document}
\title{Discrete-modulated continuous-variable quantum key distribution \\
secure against general attacks}

\author{Ignatius William Primaatmaja}
\email{williamp@squareroot8.com}
\affiliation{Department of Electrical \& Computer Engineering, National University of Singapore, Singapore}
\affiliation{Squareroot8 Technologies Pte Ltd, Singapore}

\author{Wen Yu Kon}
\affiliation{Department of Electrical \& Computer Engineering, National University of Singapore, Singapore}
\affiliation{Global Technology Applied Research, JPMorgan Chase \& Co, Singapore}

\author{Charles Lim}
\affiliation{Department of Electrical \& Computer Engineering, National University of Singapore, Singapore}
\affiliation{Global Technology Applied Research, JPMorgan Chase \& Co, Singapore}

\begin{abstract}
In recent years, discrete-modulated continuous-variable quantum key distribution (DM-CV-QKD) has gained traction due to its practical advantages: cost-effectiveness, simple state preparation, and compatibility with existing communication technologies. This work presents a security analysis of DM-CV-QKD against general sequential attacks, including finite-size effects. Remarkably, our proof considers attacks that are neither independent nor identical, and makes no assumptions about the Hilbert space dimension of the receiver. To analyse the security, we leverage the recent generalised entropy accumulation theorem and the numerical methods based on quasi-relative entropy. We also develop a novel dimension reduction technique which is compatible with the entropy accumulation framework. While our analysis reveals significant finite-size corrections to the key rate, the protocol might still offer advantages in specific scenarios due to its practical merits. Our work also offers some insights on how future security proofs can improve the security bounds derived in this work.
\end{abstract}

\maketitle
\section{Introduction} \label{sec: introduction}
Quantum key distribution (QKD) is amongst the most mature quantum technologies, with some companies pushing for its commercial applications. Since its inception in 1984~\cite{Bennett2014quantum}, QKD has undergone a rapid development in both theoretical and experimental aspects. Novel QKD protocols have been proposed and demonstrated, typically with an
emphasis on improving the secret key rate and/or the range of the protocol. 
However, for QKD to be widely deployed, it is also crucial to take into account the
simplicity of the setup as well as the cost of the hardware.

Broadly speaking, based on how classical information is being encoded into quantum information, QKD protocols can be distinguished into two major categories: discrete-variable (DV) and continuous-variable (CV). DV-QKD protocols encode discrete classical information into optical modes (e.g., polarisation, time-bins, etc) of a light source, typically a laser. To decode the classical information, photon counting techniques are used, usually using threshold single-photon detectors. DV-QKD typically has good secret key rates and range (e.g., ~\cite{frohlich2017long, boaron2018secure, li2023high}), but the single-photon detectors required are less mature, more expensive and sometimes require cooling.

On the other hand, CV-QKD protocols typically encode continuous classical information into the optical phase space~\cite{grosshans2002continuous, diamanti2015distributing, laudenbach2018continuous,zhang2024continuous} (i.e., the $X$ and $P$ quadratures). To decode the information, coherent detection techniques, such as homodyne and heterodyne detection, are typically employed. Remarkably, these detectors are also used in the classical communication infrastructure, thus homodyne and heterodyne detectors are commercially more mature than single-photon detectors. Furthermore, they are also typically more cost-effective than the single-photon detectors counterparts and they can operate at room-temperature. Moreover, these detectors are readily integrated on chips~\cite{zhang2019integrated, hajomer2023continuous}. Therefore, the use of coherent detection techniques provide an edge for practical applications, especially over metropolitan distances and last-mile key exchanges~\cite{zhang2019continuous,zhang2020continuous, wang2022sub, roumestan2022experimental, wang2024high, williams2024field}. 

However, the requirement that classical information has to be continuous (often Gaussian distributed~\cite{grosshans2002continuous}) affects the practicality of CV-QKD protocols. Real optical modulators have finite precision which means that continuous modulation is only a theoretical idealisation, which does not correspond to the practical implementation. Such theory-experiment gap can open up security loopholes for QKD and it may be exploited by a sufficiently advanced adversary. To ease this requirement, two solutions has been proposed 
in the literature. Firstly, one can encode discrete information into optical modes, like in DV-QKD, and use coherent detection techniques to perform photon counting~\cite{qi2021bb84, primaatmaja2022discrete, jin2023pilot}. This method has the additional benefit of removing the requirement of sharing a global phase reference between the two communicating parties at the price of shorter distances. Another alternative is to simply encode discrete classical information into a discrete constellation in the phase space. This approach is known as discrete-modulated CV-QKD (DM-CV-QKD). While this solution requires the two honest parties to share a global phase reference, protocols with this feature typically have longer ranges and higher secret key rate. In this work, we shall focus our attention to the latter.

For practical implementations, the quantum communication phase in QKD is performed round-by-round. Attacks on QKD can be classified based on what the adversary can do in the quantum channel in each round~\cite{scarani2009security}. For the most general attack (also known as coherent attacks), the adversary is allowed to execute any strategy that is allowed by quantum mechanics (e.g., adjusting her attack based on a measurement result in the preceding rounds, entangle multiple states together before sending them to the receiver, etc). Analysing the security of a QKD protocol against general attack may require multi-round analyses, would be intractable to parameterise as QKD protocols typically involve a large number of rounds.

Nevertheless, techniques have been developed for protocols that exhibit certain structures (e.g., permutation symmetry~\cite{renner2007symmetry, renner2009de_finetti, christandl2009postselection, leverrier2017security}, Markov condition in sequential protocols~\cite{dupuis2019entropy, dupuis2020entropy}, non-signalling condition in sequential protocols~\cite{metger2022generalised, metger2023security}, etc). In these special cases, the techniques reduce the problem to single-round analyses. Interestingly, many of these techniques use the conditional von Neumann entropy of the secret bit, given the adversary's quantum side information and the classical announcements for a single round, as the quantity of interest. This is the same quantity of interest for security under collective attacks, where an adversary acts independently and identically in each round, allowing the security of the protocol to be analyzed by only looking at a single round of the protocol. This causes many of the works that are tailored to analysing the security against collective attacks to be relevant to the security analysis against general attacks.

One of these techniques is the \textit{generalised entropy accumulation theorem} (GEAT)~\cite{metger2022generalised, metger2023security}. It is a technique that is applicable to any QKD protocols that has the so-called \textit{no-signalling} structure. Informally, the no-signalling structure requires that classical and quantum information leaked in a given round does not leak additional information about the secret bits generated in preceding rounds. Fortunately, many QKD protocols exhibit this feature which makes the GEAT a versatile tool to prove the security of a QKD protocol. Another important feature of the GEAT is that it naturally incorporates the finite-size effects into the security analysis. As any practical QKD protocol is executed within a finite duration, finite-size effects such as statistical fluctuations, have to be accounted for carefully.

DM-CV-QKD protocols have enjoyed a significant progress in their security analysis in recent years. First, the security of the protocol is analysed under restricted attacks~\cite{leverrier2009unconditional, heid2007security}. Then, Ghorai et al~\cite{ghorai2019asymptotic} has proven its security in the asymptotic limit (i.e., when there are infinite number of rounds). The security proof relies on the extremality of Gaussian states, a technique that is commonly used in Gaussian modulated CV-QKD. However, for DM-CV-QKD, it turns out that the bound is too pessimistic, which leads to sub-optimal secret key rate. The asymptotic bound was then improved by Lin et al~\cite{lin2019asymptotic} under the so-called \textit{photon number cutoff assumption}. This assumption effectively assumes that the adversary sends a state that lives in a finite-dimensional Hilbert space (corresponding to the low photon number subspace of the Fock space). While this assumption may be a good approximation which greatly simplifies the computation, it is hardly justifiable. However, the work of Lin et al~\cite{lin2019asymptotic} showed that DM-CV-QKD could potentially allow the exchange of secret keys at relatively high key rate and over reasonably long distances.

Following their initial work in Ref.~\cite{lin2019asymptotic}, Lin-L\"utkenhaus also analysed the security of the DM-CV-QKD protocol under the trusted detector noise scenario~\cite{lin2020trusted}. This refers to the scenario where the limitations of the detectors, such as imperfect quantum efficiency and electrical noise, are characterised and known to the users. In the normal circumstances, these imperfections are attributed to the adversary. Ref.~\cite{lin2020trusted} showed that by characterising these imperfections, the performance of the protocol can be greatly improved. Another important milestone is presented in Ref.~\cite{upadhyaya2021dimension}, which removes the need for photon number cutoff assumption. The work of Ref.~\cite{upadhyaya2021dimension} offered the so-called \textit{dimension reduction technique}, which effectively gives a correction term to the computation of the conditional von Neumann entropy with the photon number cutoff. Remarkably, Ref.~\cite{upadhyaya2021dimension} showed that the correction term is relatively small and DM-CV-QKD can still offer practical advantages even after accounting for this correction.

Given the potential of DM-CV-QKD shown from the studies of the protocol's security in the asymptotic regime, in recent years, researchers have tried to study the finite-size security of the protocol. Notably, Kanitschar et al~\cite{kanitschar2023finite} studied the finite-size security of the protocol under the assumption that the adversary performs collective attacks. On the other hand, Bauml et al~\cite{bauml2023security} studied the finite-size security of the protocol against general attacks but under the photon number cutoff assumption.

Remarkably, while these works focused on protocols with quadrature phase-shift keying (QPSK) encoding, DM-CV-QKD protocols with different constellations have also been studied. The two-state protocol has been analysed by Ref.~\cite{zhao2009asymptotic} in the asymptotic limit and by Ref.~\cite{matsuura2021finite, matsuura2023refined} in the finite-key regime. On the other hand, the three-state protocol was analysed in Ref.~\cite{bradler2018security}. Recently, the variant of the protocol with large constellation size has also been analysed~\cite{kaur2021asymptotic,denys2021explicit}.

At the time of writing, the security of DM-CV-QKD protocols against general attacks have not been performed without the photon number cutoff assumptions (except for the two-state case~\cite{matsuura2021finite, matsuura2023refined}). Unfortunately, it is not straightforward to remove the collective attack assumption using Ref.~\cite{kanitschar2023finite}'s framework since techniques that relies on permutation symmetry has a correction term that depends on the Hilbert space dimension, which is infinite for DM-CV-QKD.
Moreover, entropy accumulation frameworks require the parameter estimation to be based on frequency distributions while Ref.~\cite{kanitschar2023finite} estimates moments. On the other hand, it is also difficult to remove the photon number cutoff assumption in Ref.~\cite{bauml2023security} as the dimension reduction technique requires the estimate of the mean photon number of the incoming states, which is again incompatible with the entropy accumulation framework. 

In this work, we successfully remove these two assumptions simultaneously by employing the entropy accumulation framework (in particular, the GEAT version~\cite{metger2022generalised, metger2023security}) and developing a novel dimension reduction technique that only requires estimation of probability distributions. However, our work reveals that when both the collective attack and photon number cutoff assumptions are removed, DM-CV-QKD protocols suffer from a significant finite-size penalty.

The paper is organised as follows. We first introduce some preliminary notions in Section~\ref{sec: prelim}, such as the notation that we use in this paper, how security is defined for QKD and the GEAT framework which will be used in this paper. In Section~\ref{sec: protocol}, we define the DM-CV-QKD protocol that we will analyse. We will present the security analysis of the protocol in Section~\ref{sec: security}. Following which, we present a simulation of the protocol's performance in Section~\ref{sec: simulation}. Lastly, we will discuss the results that we obtain, some potential future works, and we conclude our paper in Section~\ref{sec: conclusion}.

\section{Preliminaries} \label{sec: prelim}
\subsection{Notations and definitions} \label{subsec: notation}
\subsubsection{Basic notations}

We will first introduce the notations that we use in this paper. The capital letters $A$, $B$, and $E$ are typically associated to the parties Alice, Bob and Eve, respectively. 
The other capital letters from the Latin alphabet typically refer to random variables, unless explicitly stated. In particular, we denote $N$ as the total number of rounds in the QKD protocol. Many of the random variables in this work are indexed in the subscripts to indicate the round associated to the particular random variable. For example, we use the letter $Z$ to denote Bob's measurement outcome. Then, $Z_j$ denotes Bob's measurement outcome in the $j$-th round. On the other hand, when we index a random variable in the superscript, this refers to a sequence of random variables up to that particular round. For example, the random variable $Z^j$ refers to the sequence $Z_1, Z_2 , ..., Z_j$. When referring to an entire sequence, we will write the particular letter in boldface. For example, when writing $\vb{Z}$, we refer to the sequence $Z_1, Z_2 ,..., Z_N$. When referring to a key of length $\ell$, we will also use the boldface letter $\vb{K} = K_1, K_2, ... , K_\ell$ where $K_j$ refers to the $j$-th bit of the key. For any positive integer $j$, we write $[j]$ to denote the set $\{1, 2, ..., j\}$.

The calligraphic capital letters (e.g., $\cN$, $\cM$, $\cT$, etc) are typically reserved for quantum channels. There are certain exceptions for this rule, which we will explicitly state when we first introduce the notation in the relevant section. Notable exceptions for this rule is $\cC$ which denotes the range of the random variable $C$, and $\cH$ which we generally reserve to denote Hilbert spaces.

We use $\mathbb{N}, \mathbb{Z}$ and $\mathbb{R}$ to denote the sets of natural numbers, integers and real numbers, respectively. We write $\mathsf{L}(\cH)$ to denote the set of linear operators acting on the Hilbert space $\cH$, $\mathsf{B}(\cH)$ denotes the set of bounded operators acting on the Hilbert space $\cH$, $\mathsf{D}(\cH)$ denotes the set of sub-normalised quantum states living in the Hilbert space $\cH$.

\subsubsection{Distance measures}
In QKD, we often consider how close a quantum state to another quantum state. This is made precise using distance measures. In this paper, there are two distance measures that we use. Firstly, we use $\Delta$ to denote the trace distance, which is defined as
\begin{equation}
    \Delta(\rho, \sigma) = \frac{1}{2} \norm{\rho - \sigma}_1,
\end{equation}
for some $\rho, \sigma \in \mathsf{D}(\cH)$ and some Hilbert space $\cH$. For some operator $M$, the trace norm is defined as
\begin{equation}
    \norm{M}_1 = \Tr[\sqrt{M^\dagger M}],
\end{equation}
where $M^\dagger$ denotes the adjoint of $M$. The trace distance is used in defining the security of QKD.

The second distance measure that we use is the purified distance, denoted by $\Delta_P$. It is defined as
\begin{equation}
    \Delta_P(\rho,\sigma) = \sqrt{1 - F(\rho, \sigma)^2},
\end{equation}
for some $\rho, \sigma \in \mathsf{D}(\cH)$ and some Hilbert space $\cH$. The generalised fidelity $F(\rho,\sigma)$ is defined as\
\begin{equation}
    F(\rho,\sigma) = \norm{\sqrt{\rho} \sqrt{\sigma}}_1 + \sqrt{(1 - \Tr[\rho])(1-\Tr[\sigma])}
\end{equation}
In this work, the purified distance is used to define the conditional smooth min-entropy.

\subsubsection{Entropic quantities} 
The quantum communication phase of a QKD protocol can be thought of as a weak random source -- it generates random strings that may not be uniformly distributed nor independent of the adversary's side information. To characterise the quality of a weak random source, it is customary to consider its ``entropy''. There are many entropic quantities in the literature, each appropriate to a particular information processing task. We shall introduce some of these entropic quantities that we will use in this work.

We recall that the relevant scenario for QKD is one where there is a classical register held by the honest parties and a quantum register held by the adversary. In this section, we shall denote the classical register by $A$ and the quantum register by $B$. Let $\rho_{AB} \in \mathsf{D}(\cH_A \otimes \cH_B)$ be a classical-quantum state. We can write $\rho_{AB} = \sum_{a} p(a) \ketbra{a}{a} \otimes \rho_B^a$, for some probability distribution $\{p(a)\}_a$. We will define the entropic quantities with respect to this state.

The first entropic quantity that we consider is the conditional min-entropy, which is relevant for the task of privacy amplification. This quantity is a measure of the probability of guessing the value of the register $A$, given the quantum side information stored in register $B$. The conditional min-entropy is defined as
\begin{equation}
    H_{\min}(A|B)_{\rho} = - \log_2 \max_{\{\Pi_a\}_a} \sum_{a} p(a) \Tr[\rho_B^a \Pi_a].
\end{equation}
Here, the maximisation is taken over all possible measurement operators $\{\Pi_a\}_a$ acting on $\cH_B$.

However, when allowing the output of a QKD protocol to slightly deviate from an ideal key, we can consider \textit{smoothing} on the conditional min-entropy. In this case, given the classical-quantum state $\rho_{AB}$, we do not actually evaluate the conditional min-entropy on the state $\rho_{AB}$, but on another state that is close to $\rho_{AB}$. Now, let $\esmooth \in (0,1)$. $\esmooth$ is the so-called smoothing parameter, which measures how close the state in which the conditional min-entropy is evaluated to the state under consideration, i.e., $\rho_{AB}$. In this work, the distance measure that we use for the smoothing is the purified distance. We define the $\esmooth$-ball around the state $\rho_{AB}$, which we denote as $\cB_{\esmooth}(\rho_{AB})$
\begin{multline}
    \mathcal{B}_{\esmooth}(\rho_{AB}) = \{\sigma_{AB}:  \sigma_{AB} \in \mathsf{D}(\cH_A \otimes \cH_B), \\ \Delta_P(\rho_{AB}, \sigma_{AB}) \leq \esmooth\}.
\end{multline}
Given the smoothing parameter $\esmooth$ and state $\rho_{AB}$, the conditional smooth min-entropy is defined as
\begin{equation}
    H_{\min}^{\esmooth}(A|B)_{\rho} = \max_{\sigma_{AB} \in \cB_{\esmooth}(\rho_{AB})} H_{\min}(A|B)_{\sigma}.
\end{equation}
In general, there exists a state that is $\esmooth$-close to $\rho_{AB}$ (in purified distance) but has significantly higher conditional min-entropy than $\rho_{AB}$. This means that in general, the conditional smooth min-entropy can be significantly higher than its non-smoothed counterpart. Therefore, the conditional smooth min-entropy gives a more optimal characterisation of randomness of a weak random source

The next entropic quantity that we consider is the conditional von Neumann entropy, defined as
\begin{equation}
    H(A|B)_{\rho} = H(AB)_{\rho} - H(B)_{\rho}
\end{equation}
where $H$ denotes the von Neumann entropy, defined as
\begin{equation}
\begin{split}
    H(AB)_{\rho} &= - \Tr[\rho_{AB} \log_2(\rho_{AB})]\\
    H(B)_{\rho} &= - \Tr[\rho_B \log_2(\rho_B)].
\end{split}
\end{equation}
The conditional von Neumann entropy is a relevant quantity when using the generalised entropy accumulation theorem.

Finally, we also introduce the quantum relative entropy. Given a quantum state $\rho$ and positive semi-definite operator $\sigma$, the quantum relative entropy is given by
\begin{equation}
    D(\rho||\sigma) = 
    \begin{cases}
        \Tr[\rho (\log_2(\rho) - \log_2(\sigma))] & \text{if $\mathrm{supp}(\rho) \subseteq \mathrm{supp}(\sigma)$} \\
        +\infty & \text{otherwise},
    \end{cases}
\end{equation}
where for an operator $M \in \mathsf{L}(\cH)$, $\mathrm{supp}$ is defined as
\begin{equation}
    \mathrm{supp}(M) = \{\ket{\psi} \in \cH: M\ket{\psi} \neq 0\}
\end{equation}

Given a classical-quantum state $\rho_{AB}$, the conditional von Neumann entropy is related to the quantum relative entropy
\begin{equation}
    H(A|B)_{\rho} = D(\rho_{AB}||\1_A \otimes \rho_B).
\end{equation}

\subsection{Security definition} \label{subsec: sec def}
The goal of a QKD protocol is to allow two distant parties, whom we call Alice and Bob, to share a pair of identical keys that is random from the point of view of the adversary, whom we call Eve. In the literature, this is often formalised in terms of the ``real-versus-ideal-world'' paradigm. Under this paradigm, we consider two protocols -- the real QKD protocol, which outputs the state $\rho_{\vb{K}_A \vb{K}_B \vb{L} E}$ (where $\vb{K}_A$ stores Alice's key, $\vb{K}_B$ stores Bob's key, $\vb{L}$ stores Eve's classical side information and $E$  stores Eve's quantum side information), and a hypothetical ideal protocol, which outputs a pair of perfect keys whenever the protocol is not aborted -- we denote the event in which the protocol is not aborted by $\Omega$. In particular, this perfect pair of keys are always identical, uniformly distributed and independent from the adversary's side information $E$ and $\vb{L}$.

In practice, when proving the security of a QKD protocol, we usually consider the following criteria
\begin{enumerate}
    \item \textit{Correctness}:\\
    For a fixed $\ecor \in (0,1)$, a QKD protocol is said to be $\ecor$-correct if it satisfies 
    \begin{equation}
        \Pr[\vb{K}_A \neq \vb{K}_B \wedge \Omega] \leq \ecor.
    \end{equation}
    
    \item \textit{Completeness}:
    For a fixed $\ecom \in (0,1)$, a QKD protocol protocol is said to be $\ecom$-complete if
    \begin{equation}
        \Pr[\Omega]_{\mathrm{honest}} \geq 1 - \ecom.
    \end{equation}
    The subscript ``honest'' specifies that the probability is evaluated based on the honest implementation. This refers to the scenario where the implementation might be noisy, but the adversary does not introduce more noise than expected.
    
    \item \textit{Secrecy}:\\
    For a fixed $\esec \in (0,1)$ and $\ell \in \mathbb{N}$, a QKD protocol is said to be $\esec$-secret (with respect to Bob's key) if
    \begin{equation}
        \Pr[\Omega] \cdot \Delta \left(\rho_{\vb{K}_B \vb{L} E|\Omega}, \tau_\ell \otimes \rho_{\vb{L} E|\Omega} \right)\leq \esec,
    \end{equation}
    where $\ell$ denotes the output length of the QKD protocol and $\tau_{\ell} = 2^{-\ell}\sum_{k \in \{0,1\}^\ell} \ketbra{k}{k}$.

    Specifically for secrecy, the conditional smooth min-entropy is central in the security analysis of QKD, due to the so-called \textit{leftover hash lemma}. Let $\cM_{\mathrm{PA}}$ be the quantum channel that describes the \textit{privacy amplification} step of a QKD protocol. We write $\rho_{\vb{K}_B \vb{L} E|\Omega} = \cM_{\mathrm{PA}}[\rho_{\vb{Z} \vb{L}E|\Omega}]$. The leftover hash lemma states that if two-universal hashing is used, then we have~\cite{tomamichel2011leftover}
    \begin{multline}
        \Delta(\rho_{\vb{K}_B \vb{L} E|\Omega}, \tau_{\ell} \otimes \rho_{\vb{L} E|\Omega}) \\
        \leq 2^{-\frac{1}{2}( H_{\min}^{\esmooth}(\vb{Z}|\vb{L},E)_{\rho_{\vb{Z}\vb{L}E|\Omega}} - \ell + 2)} + 2 \esmooth.
    \end{multline}
    In other words, the distance between the state describing the output of the real QKD protocol and the output of the ideal protocol is related to the conditional smooth min-entropy of the raw key $\vb{Z}$ (i.e., the string that is obtained before the privacy amplification step). This reduces the problem of proving the secrecy of a QKD protocol to deriving a lower bound on the conditional smooth min-entropy of the raw key. In turn, this can be solved using GEAT.
\end{enumerate}

\subsection{Generalised entropy accumulation} \label{subsec: GEAT}
Generalised entropy accumulation theorem (GEAT)~\cite{metger2022generalised, metger2023security} is a technique for lower bounding the conditional smooth min-entropy $H_{\min}^{\esmooth}(\vb{Z}|\vb{L}, E_N)$ of a string $\vb{Z} = Z_1 ... Z_N$ given the classical side information $\vb{L} = L_1 ... L_N$ and quantum side-information $E_N$ that are generated from a certain class of $N$-step sequential processes. GEAT formalises this class of processes in terms of the so-called GEAT channels, which we define below~\footnote{The definition we use in this work is slightly less general than the definition of GEAT channels used in Refs.~\cite{metger2022generalised, metger2023security}. In particular, we do not consider any memory registers since we focus on the device-dependent framework in this paper}.

\begin{df}[GEAT channels]
The quantum channels $\{\cM_j\}_{j=1}^N$ where $\cM_j: E_{j-1} L^{j-1} \rightarrow Z_j C_j E_j L^j$ (registers $Z_j$, $L_j$ and $C_j$ are classical registers) are called GEAT channels if each $\cM_j$ satisfies the following two conditions
\begin{enumerate}
    \item \emph{Non-signalling condition}: There exists a quantum channel $\cR_j: E_{j-1} L^{j-1} \rightarrow E_j L^j$ such that
    \begin{equation*}
        \Tr_{Z_j C_j} \circ \cM_j = \cR_j
    \end{equation*}
    \item For all $j \in [N]$, $C_j$ has a common alphabet $\cC$ and is a deterministic function of $Z_j$ and $L_j$.
\end{enumerate}
\end{df}
In the context of QKD, the non-signalling condition requires that in any given round, Eve is capable to update her side-information without Bob's private information that has been generated in the preceding rounds. This is indeed the case for many QKD protocols, including the DM-CV-QKD protocol that we will analyse. 

Roughly speaking, the GEAT states that
\begin{equation}
    H_{\min}^{\esmooth}(\vb{Z}|\vb{L}, E) \geq N h - O(\sqrt{N}),
\end{equation}
where the constant $h$ can be obtained by analysing a single round of the protocol, rather than analysing the entire protocol directly. In fact, the constant $h$ is related to the so-called \textit{min-tradeoff function} which we define below.

\begin{df}[Min-tradeoff functions]
Let $\{\cM_j\}_{j=1}^N$ be a sequence of GEAT channels and let $\cP_{\cC}$ be the set of probability distributions over $\cC$. An affine function $f: \cP_{\cC} \rightarrow \mathbb{R}$ is called a min-tradeoff function for the GEAT channels $\{\cM_j\}_{j=1}^N$ if it satisfies
    \begin{equation*}
        f(q) \leq \inf_{\sigma \in \Sigma(q)} H(Z_j|E_j L^j\tilde{E}_{j-1})_\sigma,
    \end{equation*}
    where
    \begin{multline*}
        \Sigma(q) = \Bigg\{ \sigma_{Z_j C_j L^j E_j \tilde{E}_{j-1}} = (\cM_j \otimes \id_{\tilde{E}_{j-1}})[\omega_{L^{j-1}E_{j-1}\tilde{E}_{j-1}}]: \\ 
        \sigma_{C_j} = \sum_{c \in \cC} q(c) \ketbra{c}{c}_{C_j} \Bigg\}.
    \end{multline*}
\end{df}

The constant $h$ is the value of the min-tradeoff function evaluated at the worst probability distribution that is accepted in the parameter estimation step of the protocol. To calculate the second order terms $O(\sqrt{N})$, it is useful to define some properties of the min-tradeoff functions
\begin{equation}
\begin{split}
    \Max[f] &= \max_{q \in \cP_{\cC}} f[q],\\
    \Min[f] &= \min_{q \in \cP_{\cC}} f[q],\\
    \Min_{\Sigma}[f] &= \min_{q: \, \Sigma(q) \neq \varnothing} f[q],\\
    \Var[f] &= \max_{q: \, \Sigma(q) \neq \varnothing} \sum_{c} q(c) f(\vbf{\hat{e}}_c)^2 - \left(\sum_{c} q(c) f(\vbf{\hat{e}}_c)\right)^2.
\end{split}
\end{equation}
Having introduced these properties of the min-tradeoff function, we can now present the lower bound on the conditional smooth min-entropy
\begin{thm}[Generalised entropy accumulation theorem~\cite{metger2022generalised}] \label{thm: GEAT}
    Let $\esmooth \in (0,1), \beta \in (0, \frac{1}{2})$. Let $\rho_{E_0}$ be an initial state and $\rho_{\vb{Z} \vb{C} \vb{L} E_N} = \cM_N \circ ... \circ \cM_1[\rho_{E_0}]$ be the final state after applying the GEAT channel $N$ times. Let $f$ be a min-tradeoff function and $\Omega \subseteq \cC^N$ be an event such that $h = \min_{\vb{c} \in \Omega} f(\freq_{\vb{c}})$. Then,
    \begin{multline}
        H_{\min}^{\esmooth}(\vb{Z}|\vb{L}, E_N)_{\rho_{\vb{Z} \vb{L} E_N |\Omega}} \\
        \geq N h - \frac{\beta}{1 - \beta} \frac{\ln 2}{2} V^2- N \left(\frac{\beta}{1 - \beta}\right)^2 K_{\beta}\\
        - \frac{\log_2(1 - \sqrt{1 - \esmooth^2}) - (1+\beta) \log_2 \Pr[\Omega]}{\beta},
    \end{multline}
    where
    \begin{align*}
        V &= \log_2(2d_Z^2 + 1) + \sqrt{2 + \Var[f]}\\
        K_{\beta} &= \frac{(1 - \beta)^3}{6(1 - 2 \beta)^3 \ln 2} 2^{\frac{\beta}{1 - \beta}(2 \log_2 d_Z + \Max[f] - \Min_{\Sigma}[f])}\\
        & \quad \times \ln^3 \left(2^{2 \log_2 d_Z + \Max[f] - \Min_{\Sigma}[f]} + e^2 \right).
    \end{align*}
\end{thm}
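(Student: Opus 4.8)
The plan is to trade the smooth min-entropy for a sandwiched R\'enyi entropy $H_\alpha^\uparrow$ of order $\alpha = \tfrac{1}{1-\beta}$ slightly above one (so that $\alpha-1 = \tfrac{\beta}{1-\beta}$ and $\alpha\in(1,2)$ corresponds to $\beta\in(0,\tfrac12)$), to decompose the resulting $N$-round R\'enyi entropy into a sum of single-round contributions using a chain rule that holds precisely because the channels $\cM_j$ are non-signalling, to lower bound each single-round term by the corresponding conditional von Neumann entropy (and hence by the min-tradeoff function $f$), and finally to optimise over $\beta$.

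First I would reduce the figure of merit to a R\'enyi entropy. Working with the state $\rho_{\vb{Z}\vb{C}\vb{L}E_N}$ conditioned on $\Omega$, I would apply the standard smoothing bound relating $H_{\min}^{\esmooth}$ to $H_\alpha^\uparrow$ for $\alpha>1$, which costs a term proportional to $\tfrac{1}{\beta}\log_2(1-\sqrt{1-\esmooth^2})$, and then re-express the conditional state in terms of the unnormalised $\rho_{\vb{Z}\vb{C}\vb{L}E_N}$ so that the chain rule below applies to the latter. This deconditioning introduces the factor proportional to $\tfrac{1+\beta}{\beta}\log_2\Pr[\Omega]$. Together these two manoeuvres account for the terms in the final line of the claimed bound, which involve $\log_2(1-\sqrt{1-\esmooth^2})$ and $\log_2\Pr[\Omega]$.

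The central step is the chain rule for the sandwiched R\'enyi entropy. I would establish that, for GEAT channels,
\begin{equation*}
    H_\alpha^\uparrow(\vb{Z}|\vb{L}, E_N) \geq \sum_{j=1}^{N} \inf_{\omega} H_\alpha^\uparrow(Z_j | E_j L^j \tilde{E}_{j-1}),
\end{equation*}
where each infimum runs over input states $\omega_{L^{j-1}E_{j-1}\tilde{E}_{j-1}}$ and $\tilde{E}_{j-1}$ purifies the side information carried into round $j$. This is exactly where the non-signalling condition $\Tr_{Z_j C_j}\circ\cM_j=\cR_j$ is indispensable: it guarantees that acting with $\cM_j$ cannot retroactively reduce the adversary's uncertainty about the earlier outputs, which is what lets the divergence decompose additively without the lossy correction terms of the original entropy accumulation theorem. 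I expect this to be the main obstacle, as the sharp chain rule rests on refined sub/super-additivity properties of the sandwiched R\'enyi divergence $D_\alpha$ together with a minimax argument that interchanges the optimisation over the incoming state with the order-$\alpha$ divergence.

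Finally I would analyse a single round. A second-order expansion of $H_\alpha^\uparrow(Z_j | \cdots)$ about $\alpha=1$ yields the conditional von Neumann entropy $H(Z_j | E_j L^j \tilde{E}_{j-1})$ at leading order, a first-order correction governed by $V=\log_2(2d_Z^2+1)+\sqrt{2+\Var[f]}$, and a remainder of order $(\alpha-1)^2$ whose explicit constant, determined by $d_Z$ and the range $\Max[f]-\Min_{\Sigma}[f]$, is $K_\beta$. Since $f$ is a min-tradeoff function it lower bounds each single-round von Neumann entropy on $\Sigma(q)$, and because $f$ is affine while $\Omega$ accepts only frequency vectors with $f(\freq_{\vb{c}})\geq h$, the accumulated leading contribution is bounded below by $Nh$. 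Assembling this with the first- and second-order corrections (the terms in $V$ and $K_\beta$) and the smoothing and deconditioning penalties from the first step gives the stated inequality; choosing $\beta$ to scale as $1/\sqrt{N}$ then balances the $\tfrac{1}{\beta}$ penalty against the accumulated second-order terms to produce the advertised $O(\sqrt{N})$ gap.
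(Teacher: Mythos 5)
You should first be aware that the paper contains no proof of Theorem~\ref{thm: GEAT}: it is imported verbatim, with citation, from Metger et al.~\cite{metger2022generalised}, and the paper's own contribution is limited to constructing the GEAT channels and the min-tradeoff function to which the theorem is then applied. Your sketch therefore has to be measured against the original proof in that reference. At the level of skeleton it does match it: the reduction of $H_{\min}^{\esmooth}$ to a sandwiched R\'enyi entropy of order $\alpha = \tfrac{1}{1-\beta}$ (costing the $\log_2(1-\sqrt{1-\esmooth^2})$ term), the deconditioning on $\Omega$ (costing the $\tfrac{1+\beta}{\beta}\log_2\Pr[\Omega]$ term), a chain rule for sandwiched R\'enyi entropies whose validity rests on the non-signalling condition, and a per-round expansion around $\alpha = 1$ are all genuine components of that proof.

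There is, however, a genuine gap at the step you dispatch in one sentence: ``since $f$ is a min-tradeoff function it lower bounds each single-round von Neumann entropy on $\Sigma(q)$, and because $f$ is affine while $\Omega$ accepts only frequency vectors with $f(\freq_{\vb{c}})\geq h$, the accumulated leading contribution is bounded below by $Nh$.'' This does not follow from the chain rule you state. That chain rule produces per-round terms $\inf_{\omega} H(Z_j|E_j L^j \tilde{E}_{j-1})$ in which the infimum ranges over \emph{all} input states with no statistical constraint whatsoever; this unconstrained infimum is generically far below $h$ (in QKD it is typically $0$, since an intercept-resend attack is admissible in the infimum even though it would be rejected by parameter estimation). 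Conditioning on $\Omega$ at the very end cannot retroactively restrict these per-round infima, so the statistics never enter your argument. The actual proof resolves this by modifying the process \emph{before} accumulating: to each round one appends an auxiliary classical register $D_j$ whose state, conditional on $C_j = c$, is chosen with entropy offsetting $\Max[f] - f(\vbf{\hat{e}}_c)$, one accumulates R\'enyi entropy for this modified process (whose per-round infima are now uniformly controlled by properties of $f$), and only afterwards undoes the modification using the affinity of $f$ and the definition $h = \min_{\vb{c}\in\Omega} f(\freq_{\vb{c}})$. This frequency-weighting construction is also exactly what causes $\Var[f]$ to appear in $V$ and $\Max[f]-\Min_{\Sigma}[f]$ to appear in $K_\beta$ --- constants your sketch asserts but cannot produce, since an unconstrained per-round expansion has no knowledge of $f$. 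Without this device the strategy as you state it fails at its central step.
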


To construct the min-tradeoff function, it is customary to decompose the GEAT channel into two parts: the testing channel and the generation channel
\begin{equation}
    \cM_j = (1-\gamma) \cM_{j}^{\mathrm{gen}} + \gamma \cM_j^{\mathrm{test}},
\end{equation}
where the $C_j$ register of the output of $\cM_j^{\mathrm{gen}}$ is always $\ketbra{\perp}{\perp}$. On the other hand, $\cM_j^{\mathrm{test}}$ never outputs $\ketbra{\perp}{\perp}$ on $C_j$. Then, consider another affine function $g: \cP_{\cC \setminus \{\perp\}} \rightarrow \mathbb{R}$ that satisfies the following property
\begin{equation} \label{eq: min-tradeoff g}
    \begin{split}
        g(q) \leq &\inf_{\sigma} H(Z_j|E_j, L^j \tilde{E}_{j-1})_{\cM_j[\sigma]}\\
        &\text{s.t.} \, \cM_j^{\mathrm{test}}[\sigma]_{C_j} = \sum_{c \neq \perp} q(c) \ketbra{c}{c}.
    \end{split}
\end{equation}
Intuitively, $g$ is a lower bound on the conditional von Neumann entropy subject to the probability distribution of the test rounds. From $g$, we can construct the min-tradeoff function from the function $g$ by using the following construction
\begin{equation}
\begin{split}
    f(\vbf{\hat{e}}_c) &= \Max[g] + \frac{g(\vbf{\hat{e}}_c) - \Max[g]}{\gamma}, \qquad  c \in \cC \setminus \{\perp\},\\
    f(\vbf{\hat{e}}_{\perp}) &= \Max[g].
\end{split}
\end{equation}
This construction will lead to the following properties of the min-tradeoff function
\begin{equation} \label{eq: min tradeoff properties}
    \begin{split}
        \Max[f] &= \Max[g],\\
        \Min[f] &= \left(1 - \frac{1}{\gamma} \right)\Max[g] + \frac{1}{\gamma}\Min[g],\\
        \Min_{\Sigma}[f] &\geq \Min[g],\\
        \Var[f] &\leq  \frac{(\Max[g] - \Min[g])^2}{\gamma}.
    \end{split}
\end{equation}
The explicit construction of the function $g$ for DM-CV-QKD will be presented in Section~\ref{sec: security}.

\section{Protocol} \label{sec: protocol}

\begin{figure*}
\includegraphics[width=0.9\textwidth]{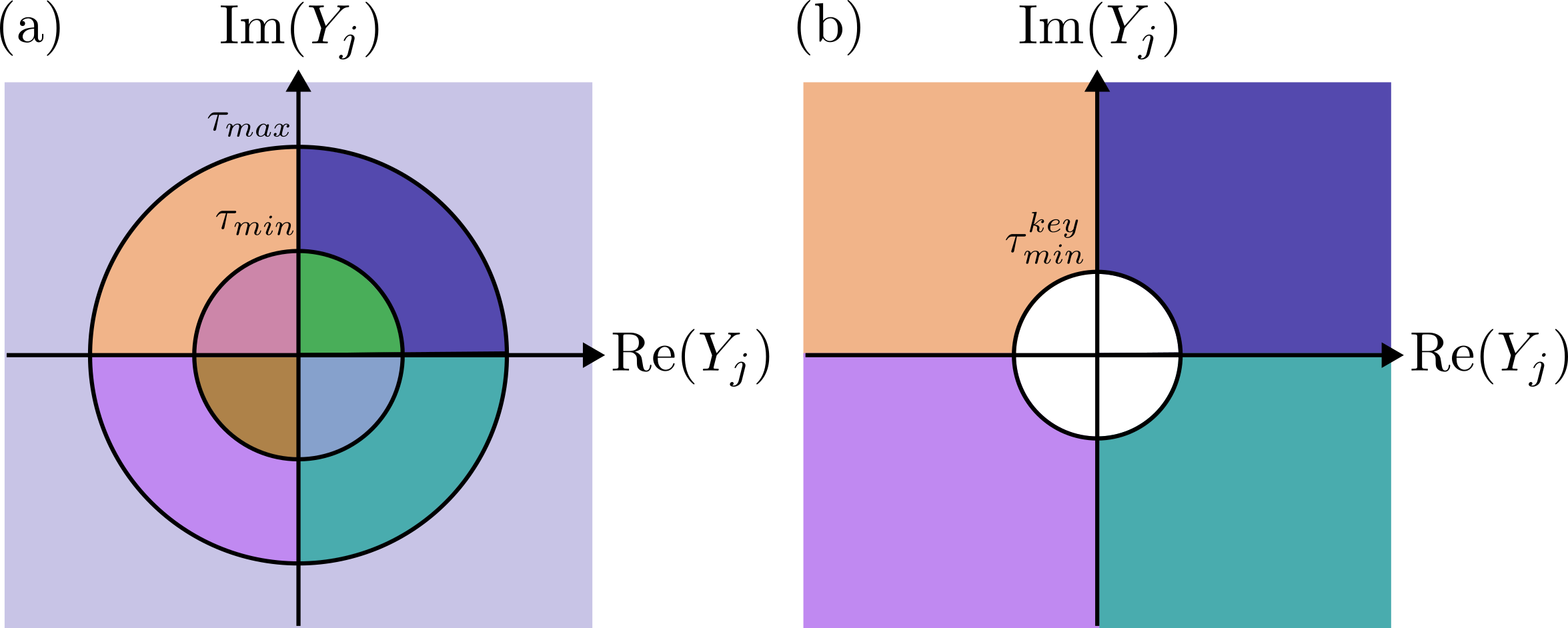}
\caption{Discretisation map for (a) $T_j=1$ and (b) $T_j=0$. When $T_j=1$, the phase space is split into 9 regions. When $T_j=0$, the phase space is split into 4 regions based on angle $\theta_j$, and excludes the central region with $W_j<\tau_{\min}^\mathrm{key}$, which is post-selected away.}
\label{fig:TestKeyRound}
\end{figure*}

We shall now give the description for the protocol that we consider. For the sake of concreteness, we consider a variant of the protocol with quadrature-phase-shift-keying (QPSK) encoding, heterodyne detection and reversed reconciliation for error-correction. However, our security analysis can be easily adapted to the case where another constellation is used for the encoding of the quantum states as we leverage on numerical methods that did not exploit any symmetry of the encoding scheme. The dimension reduction technique can also be modified to the protocol variant which uses homodyne detection rather than heterodyne detection.

We fix $N \in \mathbb{N}$, $\alpha > 0$, $\wmin^{\mathrm{key}} > 0$, $0 \leq \wmin < \wmax$, $\gamma \in (0, 1)$ and the parameter estimation scores $f_{\mathrm{PE}}$ and its tolerated range $\cF_{\tol}$.

\begin{enumerate}
    \item \textit{State preparation}: For every round $j \in [N]$, Alice uniformly chooses $X_j \in \{0,1,2,3\}$. Next, she prepares the coherent state $\ket{\psi_{X_j}} = \ket{\alpha e^{i (\pi X_j/2 + \pi/4)}}$ for some fixed $\alpha > 0$ that is specified by the protocol. She sends the state to Bob via an untrusted quantum channel.
    
    \item \textit{Measurement}: Bob randomly assigns $T_j \in \{0, 1\}$ with probability $\{1-\gamma, \gamma\}$, respectively. Upon receiving the optical signal from the untrusted quantum channel, Bob performs a heterodyne detection to obtain an outcome $Y_j \in \mathbb{C}$. 
    From the measurement outcome, he writes $Y_j = \sqrt{W_j} e^{i \theta_j}$ for some $\theta_j \in [0, 2\pi)$ and $W_j = \abs{Y_j}^2$. 
    If $T_j = 0$, he applies the following discretisation map $Y_j \rightarrow Z_j$ (see Fig.~\ref{fig:TestKeyRound}b):
    \begin{equation*}
        Z_j = \begin{cases}
        \varnothing & W_j < \wmin^{\mathrm{key}}, \\
        0 & W_j \geq \wmin^{\mathrm{key}} \wedge \theta_j \in [0 , \pi/2),\\
        1 & W_j \geq \wmin^{\mathrm{key}} \wedge \theta_j \in [\pi/2 , \pi),\\
        2 & W_j \geq \wmin^{\mathrm{key}} \wedge \theta_j \in [\pi , 3\pi/2),\\
        3 & W_j \geq \wmin^{\mathrm{key}} \wedge \theta_j \in [3\pi/2 , 2\pi),
        \end{cases}
    \end{equation*}
    If $T_j = 1$, he applies the following discretisation map instead (see Fig.~\ref{fig:TestKeyRound}a):
    \begin{equation*}
        Z_j = \begin{cases}
        (\varnothing, 0) & W_j < \wmin \wedge \theta_j \in [0 , \pi/2), \\
        (\varnothing, 1) & W_j < \wmin \wedge \theta_j \in [\pi/2 , \pi), \\
        (\varnothing, 2) & W_j < \wmin \wedge \theta_j \in [\pi , 3\pi/2), \\
        (\varnothing, 3) & W_j < \wmin \wedge \theta_j \in [3\pi/2 , 2\pi), \\
        0 & \wmin \leq W_j \leq \wmax \wedge \theta_j \in [0 , \pi/2),\\
        1 & \wmin \leq W_j \leq \wmax \wedge \theta_j \in [\pi/2 , \pi),\\
        2 & \wmin \leq W_j \leq \wmax \wedge \theta_j \in [\pi , 3\pi/2),\\
        3 & \wmin \leq W_j \leq \wmax \wedge \theta_j \in [3\pi/2 , 2\pi), \\
        \top & W_j > \wmax
        \end{cases}
    \end{equation*}
    Repeat step 1 to 2 for $N$ times.
    
    \item \textit{Parameter estimation}: For each $j \in [N]$, Bob announces $T_j$. If $T_j = 0$, Bob announces the rounds $j$ in which $Z_j = \varnothing$. For each $T_j = 1$, Alice announces $X_j$. Bob computes $C_j = f_{\mathrm{PE}}(T_j, X_j, Z_j)$ for some deterministic function $f_{\mathrm{PE}}$ where $C_j = \perp$ if and only if $T_j = 0$ and $C_j = \top$ if and only if $T_j = 1$ and $Z_j = \top$. If $\freq_{\vb{C}} \in \cF_{\tol}$, they continue to the next step, else they abort.
    
    \item \textit{Classical post-processing}: Bob sends Alice a syndrome of his data. They apply error correction and error verification. If the error verification passes, then they apply privacy amplification on their data to obtain the final secret key.
\end{enumerate}
\section{Security analysis} \label{sec: security}
In this section, we shall present the security analysis of DM-CV-QKD protocols. As defined in Section~\ref{subsec: sec def}, to prove the security of a QKD protocol, we need to show that it satisfies the completeness, correctness and secrecy criteria.

\subsection{Correctness}
The proof of the correctness of the protocol is pretty straightforward. The correctness of the protocol is guaranteed by the error verification step, which is performed after the error correction step. In the error correction step, Bob sends the syndrome of his key so that Alice can produce a guess $\hat{\vb{Z}}$ of his key, $\vb{Z}$. The purpose of the error verification step is to verify that the error correction step has successfully enables Alice to correctly guess Bob's key. If the error correction step fails (i.e., when $\hat{\vb{Z}} \neq \vb{Z}$), we want the error verification step to detect this failure with high probability. This can be done using the property of two-universal hashing.

In the error verification step, Bob randomly picks a hash function $f_{\mathrm{EV}}$ from the two-universal family and calculate the hash of his key, $f_{\mathrm{EV}}(\vb{Z})$. He would then announce the hash function  that he chose and the corresponding hash value to Alice using the public but authenticated classical communication channel. Alice would then compute the hash value of her guess of Bob's key, $f_{\mathrm{EV}}(\hat{\vb{Z}})$. If $f_{\mathrm{EV}}(\hat{\vb{Z}}) \neq f_{\mathrm{EV}}(\vb{Z})$, then they abort the protocol.

The correctness of the protocol is due to the property of the two-universal hash function. Let $\ell_{\mathrm{EV}}$ be the length of the hash that Bob announced. Then, by the property of two-universal hash functions, we have
\begin{equation}
    \Pr[f_{\mathrm{EV}}(\hat{\vb{Z}}) \neq f_{\mathrm{EV}}(\vb{Z}) | \hat{\vb{Z}} \neq \vb{Z}] \leq \frac{1}{2^{\ell_{\mathrm{EV}}}}
\end{equation}
Thus, by taking $\ell_{\mathrm{EV}} \geq \log_2(1/\ecor)$, and noting that $\hat{\vb{Z}} \neq \vb{Z}$ implies $\vb{K}_A\neq\vb{K}_B$, then
\begin{multline}
    \Pr[\vb{K}_A\neq\vb{K}_B\land\Omega]\\
    \leq\Pr[\hat{\vb{Z}} \neq \vb{Z}]\Pr[\Omega | \hat{\vb{Z}} \neq \vb{Z}] \leq \ecor,
\end{multline}
which proves that the protocol is $\ecor$-correct.

\subsection{Completeness}
To prove completeness, we recall that the protocol can be aborted in either the parameter estimation step or in the error verification step. Therefore, we can write
\begin{equation}
    \Pr[\text{abort}] \leq \Pr[\text{abort in PE}] + \Pr[\text{abort in EV}],
\end{equation}
where $\Pr[\text{abort in PE}]$ is the probability that the protocol is aborted in the parameter estimation step and $\Pr[\text{abort in EV}]$ is the probability that the protocol is aborted in the error verification step. The latter can be upper bounded also
\begin{equation}
    \Pr[\text{abort in EV}] \leq \Pr[\text{EC fail}],
\end{equation}
where $\Pr[\text{EC fail}]$ is the probability that the error correction algorithm fails to make Alice's key identical to Bob's key.

Therefore, we can choose the completeness parameter $\ecom$ to be equal to
\begin{equation}
    \ecom = \ecom^{\mathrm{PE}} + \ecom^{\mathrm{EC}},
\end{equation}
where $\ecom^{\mathrm{PE}} \geq \Pr[\text{abort in PE}]$ and $\ecom^{\mathrm{EC}} \geq \Pr[\text{EC fail}]$.

While there is an information-theoretic bound on $\Pr[\text{EC fail}]$ based on the error rate in the honest implementation, the error correction algorithm that saturates this bound is difficult to implement in practice~\cite{renes2012oneshot}. On the other hand, it is typically difficult to prove an upper bound on the failure probability of practical error correction algorithms and one often relies on heuristics. Therefore, in this work, we shall focus on bounding the probability of aborting the protocol in the parameter estimation step.

To do that, we need to assume that the honest implementation produces a probability distribution $\vbf{p}$ on the score $C$. Then, for each $c \in \cC$, we fix $\zeta_c, \zeta'_c > 0$ such that the protocol is accepted if and only if
\begin{equation}
    p(c) - \zeta'_c \leq \freq(c) \leq p(c) + \zeta_c
\end{equation}
for all $c \in \cC$.

Now, we assume that for each $c \in \cC$, there exists $\ecom^{\mathrm{PE},c} > 0$ such that
\begin{equation}
    \Pr[ p(c) - \zeta'_c \leq \mathrm{freq}(c) \leq p(c) + \zeta_c] \geq 1 - \ecom^{\mathrm{PE},c}.
\end{equation}
We can construct such $\ecom^{\mathrm{PE},c}$ by taking the complement of the above event, and simplify using the union bound,
\begin{multline}
    \ecom^{\mathrm{PE}, c} \geq \Pr[\freq(c) \geq p(c) - \zeta'_c] \\
    + \Pr[\freq(c) \leq p(c) + \zeta_c].
\end{multline}
In other words, we want to find an upper bound on the sum of the probabilities of the two extreme events. To obtain such upper bound, we can consider the following bound on the binomial distribution, which was derived in Ref.~\cite{zubkov2013complete}
\begin{lemma}[Concentration inequality for binomial distribution~\cite{zubkov2013complete}] \label{lem: binomial bound}
Let $n \in \mathbb{N}$, $p \in (0,1)$ and let $X$ be a random variable distributed according to $X \sim \mathrm{Binomial}(n,p)$. Then, for any integer $k$ such that $0 \leq k < n$, we have
\begin{equation}
    F(n,p,k) \leq \Pr[X \leq k] \leq F(n,p,k+1),
\end{equation}
where
\begin{align*}
    D(q,p) &= q \ln \left( \frac{q}{p} \right) + (1-q) \ln \left( \frac{1-q}{1-p}\right),\\
    \Phi(a) &= \frac{1}{\sqrt{2\pi}} \int_{-\infty}^a \mathrm{d} x \, e^{-x^2/2},\\
    F(n,p,k) &= \Phi \left(\mathrm{sign}\left(\frac{k}{n} - p \right) \sqrt{2n D\left( \frac{k}{n}, p\right)} \right).
\end{align*}
\end{lemma}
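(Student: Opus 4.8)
The plan is to treat this as a sharp comparison between the binomial cumulative distribution function and a normal approximation calibrated by the Kullback--Leibler divergence, following the strategy behind the original result of Ref.~\cite{zubkov2013complete}. My first move is to halve the work by exploiting a symmetry of the binomial tails. Writing $G_{n,k}(p) := \Pr[\mathrm{Binomial}(n,p) \le k]$, the identity $G_{n,k}(p) = 1 - G_{n,\,n-k-1}(1-p)$ together with the divergence symmetry $D(q,p) = D(1-q,1-p)$ and $\Phi(-x) = 1 - \Phi(x)$ shows that the upper bound $G_{n,k}(p) \le F(n,p,k+1)$ is exactly the lower bound $F(n,p',k') \le G_{n,k'}(p')$ applied to the reflected parameters $k' = n-k-1$, $p' = 1-p$ (which remain admissible since $0 \le k < n$ gives $0 \le k' < n$). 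Hence it suffices to prove the single inequality $F(n,p,k) \le G_{n,k}(p)$ for all admissible $n,k,p$.

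For fixed $n$ and $k \ge 1$, I would study the difference $\delta(p) := G_{n,k}(p) - F(n,p,k)$ on $p \in (0,1)$ and show $\delta \ge 0$. Setting $b(p) := \mathrm{sign}(k/n - p)\sqrt{2nD(k/n,p)}$ so that $F(n,p,k) = \Phi(b(p))$ and $b(p)^2 = 2nD(k/n,p)$, one checks the boundary behaviour: as $p \to 0^+$ both $G_{n,k}$ and $F$ tend to $1$, and as $p \to 1^-$ both tend to $0$, so $\delta$ vanishes at both endpoints. The derivatives are explicit: the classical incomplete-beta identity gives $G_{n,k}'(p) = -n\binom{n-1}{k}p^k(1-p)^{n-1-k}$, while the chain rule with $\partial_p D(k/n,p) = (p-k/n)/(p(1-p))$ gives $F'(p) = \phi(b(p))\,b'(p)$ with $\phi(b) = (2\pi)^{-1/2}e^{-nD(k/n,p)}$. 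Both derivatives are negative, so the sign of $\delta'$ is governed by the single ratio $r(p) := |G_{n,k}'(p)|/|F'(p)|$, and $\delta'(p) > 0$ precisely when $r(p) < 1$.

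Substituting $e^{-nD(k/n,p)} = n^n p^k(1-p)^{n-k}/(k^k(n-k)^{n-k})$ collapses the $p$-dependence dramatically: the powers of $p$ and $1-p$ cancel and one is left with $r(p) = C_{n,k}\, p\,\sqrt{2nD(k/n,p)}\,/\,|p - k/n|$ for the $p$-independent constant $C_{n,k} = \binom{n-1}{k}\sqrt{2\pi}\,k^k(n-k)^{n-k}/n^n$. A short asymptotic check shows $r(p) \to 0$ as $p \to 0^+$ and $r(p) \to +\infty$ as $p \to 1^-$, so $r$ crosses the value $1$ at least once. If I can show it crosses \emph{exactly} once (equivalently, that $\delta$ is unimodal on $(0,1)$), then a function vanishing at both endpoints and rising before falling is nonnegative throughout, which is the claim.

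I expect the single-crossing statement to be the main obstacle. It requires proving that $p \mapsto p\sqrt{D(k/n,p)}/|p-k/n|$ is monotone, or at least crosses the level $1/C_{n,k}$ exactly once, and controlling $C_{n,k}$ uniformly in $n,k$ via Stirling-type estimates so that the crossing lands correctly near $p = k/n$ (where $D$ vanishes to second order and $F = 1/2$ matches $G_{n,k}$ to leading order). This is precisely the delicate part carried out in full in Ref.~\cite{zubkov2013complete}. The degenerate cases $k = 0$ and its reflection $k = n-1$, where $b(p)$ has no interior zero and $\delta$ does not vanish as $p \to 0^+$, should be dispatched separately by a direct comparison, as the inequality is comfortably loose there.
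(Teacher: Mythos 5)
There is no proof in the paper to compare against: the paper imports this lemma verbatim from Zubkov and Serov~\cite{zubkov2013complete}, and the citation is its entire justification. So your attempt has to stand on its own, measured against the published argument that both you and the paper ultimately lean on.

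The scaffolding of your sketch is correct. The reflection identity $G_{n,k}(p) = 1 - G_{n,\,n-k-1}(1-p)$, together with $D(q,p) = D(1-q,1-p)$ and $\Phi(-x) = 1-\Phi(x)$, does reduce the upper bound to the lower bound at the reflected parameters; the derivative formulas $G_{n,k}'(p) = -n\binom{n-1}{k}p^k(1-p)^{n-1-k}$ and $F'(p) = \phi(b(p))\,b'(p)$ with $\phi(b(p)) = (2\pi)^{-1/2}e^{-nD(k/n,p)}$ are right; the cancellation yielding $r(p) = C_{n,k}\,p\sqrt{2nD(k/n,p)}/\lvert p-k/n\rvert$ with $C_{n,k}=\binom{n-1}{k}\sqrt{2\pi}\,k^k(n-k)^{n-k}/n^n$ checks out; the endpoint limits are correct for $1 \le k \le n-1$; and the degenerate case $k=0$ is indeed dispatched by the elementary Gaussian tail bound $\Phi(-x) \le e^{-x^2/2}$, since $F(n,p,0)=\Phi\bigl(-\sqrt{2n\ln\tfrac{1}{1-p}}\bigr)$ and $G_{n,0}(p)=e^{-n\ln\frac{1}{1-p}}$. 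However, the proposal stalls exactly where the theorem lives. Knowing that $r$ runs from $0$ to $+\infty$ gives at least one crossing of the level $1$, which by itself implies nothing about the sign of $\delta$: if $r$ crossed $1$ several times, $\delta$ could oscillate and dip below zero between its vanishing endpoints. The single-crossing (equivalently, unimodality of $\delta$, or monotonicity of $p \mapsto p\sqrt{D(k/n,p)}/\lvert p-k/n\rvert$) is not a technical loose end that can be waved at — it is the substance of the Zubkov--Serov result, and the delicate multi-page analysis in Ref.~\cite{zubkov2013complete} exists precisely to establish this kind of statement. Deferring that step to the very reference whose theorem you are proving makes the argument circular as a standalone proof. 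What you have is a correct and clean reduction of the theorem to one sharp monotonicity claim, plus an honest acknowledgement that the claim is unproven; that is a useful roadmap, but it is a genuine gap, not a proof.
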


To apply the above lemma, for each $c \in \cC$ and for each $i \in [N]$, we set the random variable $V_i^{(c)}$ as
\begin{equation}
    V_i^{(c)} =
    \begin{cases}
        1 & \text{if $C_i = c$}\\
        0 & \text{otherwise}.
    \end{cases}
\end{equation}
Then, $V_{\mathrm{tot}}^{(c)} = \sum_{i = 1}^N V_i^{(c)}$ is a binomial random variable which allows us to apply the lemma. For $\Pr[\freq(c) \geq p(c) - \zeta'_c]$, we can write
\begin{align}
    &\Pr[\freq(c) \geq p(c) - \zeta'_c] \nonumber\\
    &\leq \Pr[V_{\mathrm{tot}}^{(c)} \geq \floor{N (p(c) - \zeta'_c)}] \nonumber \\
    &= 1 - \Pr[V_{\mathrm{tot}}^{(c)} \leq \floor{N (p(c) - \zeta'_c)} - 1] \nonumber\\
    &\leq 1 - F(N, p(c), \floor{N (p(c) - \zeta'_c)} - 1).
\end{align}
Similarly, for $\Pr[\freq(c) \leq p(c) + \zeta_c]$ we write
\begin{align}
    &\Pr[\freq(c) \leq p(c) + \zeta_c] \nonumber\\
    &\leq \Pr[V_{\mathrm{tot}}^{(c)} \leq \ceil{N (p(c) + \zeta_c)}] \nonumber \\
    &\leq F(N, p(c), \ceil{N (p(c) + \zeta_c)} + 1).
\end{align}
Therefore, we can choose $\ecom^{\mathrm{PE},c}$ as
\begin{multline}
    \ecom^{\mathrm{PE},c} = 1 - F(N, p(c), \floor{N(p(c) - \zeta'_c)} - 1) \\ 
    + F(N, p(c), \ceil{N(p(c) + \zeta_c)} + 1).
\end{multline}
Finally, one can vary the choice of $\zeta'_c, \zeta_c > 0$ to maximise the secret key rate subject to
\begin{equation}
    \sum_{c \in \cC} \ecom^{\mathrm{PE}, c} \leq \ecom^{\mathrm{PE}}.
\end{equation}
Thus, any choice of $\zeta'_c$ and $\zeta_c$ that satisfies the above constraint will automatically satisfy the completeness criterion for the parameter estimation.

\subsection{Secrecy}
\subsubsection{Leftover hash lemma} \label{subsec: leftover hashing}
To prove the secrecy of the DM-CV-QKD protocol, we shall leverage on the leftover hash lemma. Before we do that, for a fixed $\eEAT \in (0,1)$, we consider the following two distinct cases
\begin{itemize}
    \item $\Pr[\Omega] \leq \eEAT$\\
    In this case, we can choose the secrecy parameter $\esec = \eEAT$ since the trace distance term $\Delta(\rho_{\vb{K}_B \vb{L} E | \Omega}, \tau_{\ell} \otimes \rho_{\vb{L} E|\Omega})$ in the definition of secrecy is upper bounded by one.
    \item $\Pr[\Omega] > \eEAT$\\
    In this case, we can use the leftover hash lemma to bound $\Delta(\rho_{\vb{K}_B \vb{L} E | \Omega}, \tau_{\ell} \otimes \rho_{\vb{L} E|\Omega})$ in terms of the conditional smooth min-entropy $H_{\min}^{\esmooth}(\vb{Z}|\vb{L}, E)_{\rho_{\vb{Z}\vb{L}E|\Omega}}$, which in turn can be lower bounded using GEAT while substituting $\Pr[\Omega]$ with $\eEAT$ in Theorem~\ref{thm: GEAT}.
\end{itemize}
\subsubsection{GEAT channels}
For our purpose, we consider the following GEAT channel $\cM_j: E_{j-1} L^{j-1} \rightarrow Z_j E_j L^j C_j$ for the $j$-th round, where the register $L_j$ stores the classical information accessible to Eve in the $j$-th round.
\begin{enumerate}
    \item Alice randomly generates $X_j$ and then prepares the state $\ket{\psi_{X_j}}_{Q_j}$.
    
    \item Eve applies her attack, given by the CPTP map $\cE_j$: $E_{j-1} Q_j \rightarrow E_j B_j$, where $B_j$ is the quantum signal received by Bob in the $j$-th round.
    
    \item Bob randomly generates $T_j \in \{0,1\}$ to decide whether a given round is a generation or test round. Bob measures the incoming signal using a heterodyne detection to obtain the outcome $Y_j$. He then performs the appropriate discretisation on $Y_j$ based on $T_j$ to obtain $Z_j$ (refer to Section~\ref{sec: protocol}).
    
    \item If $Z_j = \varnothing$ and $T_j = 0$, Bob sets $V_j = 1$, otherwise he sets $V_j = 0$. Next, Bob announces $T_j$ and $V_j$. If $T_j = 1$, Alice announces $X_j$. If $T_j = 1$, set $L^{\mathrm{test}}_j = X_j$. If $T_j = 0$, set $L^{\mathrm{test}}_j = \perp$. Finally, set $L_j = (V_j, T_j, L^{\mathrm{test}}_j)$. Note that $L_j$ is accessible to Eve. We can compute $C_j$ from $Z_j$ and $L_j$. Finally, we trace out $X_j$.
\end{enumerate}
The above channel satisfies the properties of GEAT channels in the sense that it satisfies the requirements that the register $C_j$ is a deterministic function of $Z_j$ and $L_j$ and also the no-signalling condition. To see that this is indeed the case, note that since Alice and Bob do not have an internal memory register (since we are working in device dependent framework), we can consider a channel $\cR_j$ where Eve can simply simulate their actions (i.e., Alice's state preparation and Bob's measurement) and trace out their private registers in the end. By construction, such channel will always satisfy $\cR_j = \Tr_{Z_j C_j} \circ \cM_j$. Thus, we can use these GEAT channels to apply the generalised EAT.

\subsubsection{Constructing min-tradeoff functions via dimension reduction}
Next, we construct the min-tradeoff function. As discussed in Section~\ref{subsec: GEAT}, to construct the min-tradeoff function, it is customary to find a function $g$
that satisfies the property stated in Eq.~\eqref{eq: min-tradeoff g}. We realise that function $g(q)$ is essentially a lower bound on the conditional entropy subjected to the statistics from the test rounds. 

It is not easy to construct such function directly as the dimension of Bob's Hilbert space is infinite. Instead, it is more convenient to consider a \textit{virtual scenario} where after Eve performs her attack (at the end of Step 2 of the GEAT channel), there is another completely positive, trace-non-increasing (CPTNI) map $\cT_j$ that truncates Bob's state into the space spanned by the Fock states $\{\ket{0}, ..., \ket{\nmax}\}$. More precisely, the map $\cT_j$ is given by
\begin{equation}
    \cT_j[\rho_{B_j E_j}] = (N_0 \otimes \1_{E_j}) \rho_{B_j E_j} (N_0 \otimes \1_{E_j}),
\end{equation}
where $N_0 = \sum_{n = 0}^{\nmax} \ketbra{n}{n}$ is the projector to the subspace spanned by $\{\ket{0}, ..., \ket{\nmax}\}$.

We want to find a function $g(q)$ that has the following form:
\begin{equation}
    g(q) = \tilde{g}(q) - g_{\corr}^{(\nu_c)}(q_{\top}),
\end{equation}
where the function $\tilde{g}(q)$ has similar properties as the function $g(q)$, except that it is evaluated on the output state of the truncated GEAT channel, where we applied the CPTNI map $\cT_j$ at the end of Step 2 (instead of the actual GEAT channel). In other words, the function $\tilde{g}: \cP_{\cC \setminus \{\perp\}} \rightarrow \mathbb{R}$ is an affine function that gives the lower bound on the conditional von Neumann entropy evaluated on the truncated state. The function $g_{\corr}^{(\nu_c)}(q_{\top})$ gives the appropriate correction term and it only depends on $q_{\top}$, which is the probability that $C_j = \top$. Importantly, unlike the function $g(q)$, the function $\tilde{g}(q)$ can be computed using a finite-dimensional optimisation.

More precisely, to construct the function $g(q)$, we can use the following dimension reduction theorem, which we prove in Appendix~\ref{annex: dimension reduction}. 
\begin{widetext}
    \begin{thm}[Dimension reduction] \label{thm: dimension reduction}
    For a fixed $\nmax \in \mathbb{N}$, we denote $\kappa = \Gamma(\nmax + 2, 0) / \Gamma(\nmax+2, \wmax)$, where $\Gamma$ denotes the upper incomplete gamma function, and we denote $N_0 = \sum_{n = 0}^{\nmax} \ketbra{n}{n}$. We let $\rho_{ABE} \in \mathsf{D}(\cH_A \otimes \cH_B \otimes \cH_E)$ be a normalised quantum state and $\tilde{\rho}_{ABE} = (\1_A \otimes N_0 \otimes \1_E) \rho_{ABE} (\1_A \otimes N_0 \otimes \1_E)$ be the truncated version of the state. For each $c \in \cC \setminus \{\perp\}$, we let $\Pi_c \in \mathsf{L}(\cH_A \otimes \cH_B)$ be the projector that corresponds to the score $c$ and $\tilde{\Pi}_c = (\1_A \otimes N_0) \Pi_c (\1_A \otimes N_0)$ be its truncated version. We let $\nu_c \in (0, 2/3\kappa), \nu_L \in (0, (5+\sqrt{5})/10\kappa]$, and $\nu_U \in (0, (5-\sqrt{5})/10\kappa]$ be given. Finally, we let $\tilde{\rho}_{ZLE}$ be the state resulting from applying Bob's measurement on the truncated state $\tilde{\rho}_{ABE}$, storing the output on the classical register $Z$ (with dimension $d_Z$), making the appropriate announcement $L$, and then tracing out Alice's subsystem. Then, we have
    \begin{equation*}
        g(q) = \tilde{g}(q) - g_{\corr}^{(\nu_c)}(q_{\top}),
    \end{equation*}
    where $\tilde{g}(q)$ is the linearisation of the dual $\tilde{g}'(q)$
    \begin{equation} \label{eq: finite dimensional optimisation}
    \begin{split}
        \tilde{g}'(q) \leq \inf_{\tilde{\rho}} \, &H(Z|L,E)_{\tilde{\rho}_{ZLE}}\\
        \text{s.t.} \, &\tilde{\rho}_{AB} \succeq 0\\
        & \Tr[\tilde{\rho}_{AB}] \leq 1,\\
        & \Tr[\tilde{\rho}_{AB}] \geq 1 - \kappa q_{\top},\\
        & \Delta(\tilde{\rho}_{A}, \Tr_Q \ketbra{\Psi}{\Psi}_{AQ}) \leq \frac{1}{2}\kappa q_{\top}\\
        & \Tr[\tilde{\rho}_{AB} \tilde{\Pi}_c] \leq q_c - \xi_U(q_{\top}) \quad \forall c \in \cC \setminus \{\perp\}, \\
        & \Tr[\tilde{\rho}_{AB} \tilde{\Pi}_c] \geq q_c - \xi_L(q_{\top}) \quad \forall c \in \cC \setminus \{\perp\},
    \end{split}
\end{equation}
and where we define the following correction terms
\begin{align*}
    g^{(\nu_c)}_{\corr}(q_{\top}) &= m_{\corr} (q_{\top} - \nu_c) + c_{\corr},\\
    \xi_L(\nu) &= \begin{cases}
        \kappa\nu + 2\sqrt{\kappa\nu(1-\kappa\nu)} & \mathrm{if} \quad \nu< \frac{5+ \sqrt{5}}{10\kappa}\\
        \frac{1+\sqrt{5}}{2} & \mathrm{if} \quad \nu \geq \frac{5+ \sqrt{5}}{10\kappa}     
    \end{cases}\\
    \xi_U(\nu) &= \begin{cases}
        \kappa\nu - 2 \sqrt{\kappa\nu(1-\kappa\nu)}  &\mathrm{if} \quad \nu < \frac{5 - \sqrt{5}}{10\kappa}\\
        \frac{1 - \sqrt{5}}{2} \qquad &\mathrm{if} \quad \nu \geq \frac{5 - \sqrt{5}}{10\kappa}
    \end{cases}
\end{align*}
with $w_c = \kappa \nu_c$
\begin{align*}
    \delta_c &= \frac{1}{2\sqrt{2}}\Bigg[\sqrt{w_c (2 - w_c) + w_c\sqrt{w_c (4 - 3 w_c)}} + \sqrt{w_c (2 - w_c) - w_c \sqrt{w_c (4 - 3 w_c)}} \Bigg]\\
    m_{\corr} &= \Bigg[\frac{3 w_c + \sqrt{w_c (4 - 3 w_c)}}{\sqrt{(2 - w_c) + \sqrt{w_c (4 - 3 w_c)}}} - \frac{3w_c - \sqrt{w_c (4 - 3w_c)}}{\sqrt{(2 - w_c) - \sqrt{w_c (4 - 3w_c)}}} \Bigg] \\
    c_{\corr} &= \delta_c \log_2 d_Z + (1 + \delta_c) h_2 \left(\frac{\delta_c}{1 + \delta_c}\right).
\end{align*}
and linearised terms
\begin{align*}
    \xi_U(q_{\top}) &\geq \hat{\xi}_U^{(\nu_U)}(q_{\top}) = \left(1 - \frac{(1- 2 \kappa \nu_U)}{\sqrt{\kappa \nu_U (1 - \kappa \nu_U)}} \right) \kappa q_{\top} - \sqrt{\frac{\kappa \nu_U}{1-\kappa \nu_U}},\\
    \xi_L(q_{\top}) &\leq \hat{\xi}_L^{(\nu_L)}(q_{\top}) = \left(1 + \frac{(1- 2 \kappa \nu_L)}{\sqrt{\kappa \nu_L (1 - \kappa \nu_L)}} \right) \kappa q_{\top} + \sqrt{\frac{\kappa \nu_L}{1-\kappa \nu_L}},   
\end{align*}
replacing their respective terms in the dual.
\end{thm}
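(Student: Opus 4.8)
The plan is to reduce the infinite-dimensional quantity $H(Z|L,E)_\rho$ to the finite-dimensional quantity $H(Z|L,E)_{\tilde\rho}$ evaluated on the truncated state, paying a penalty that can be controlled entirely by the single announced probability $q_\top$. The backbone of the argument is a continuity bound for the conditional von Neumann entropy. Since the truncation map $\cT_j$ produces a sub-normalised state, I would invoke the Alicki--Fannes--Winter bound in Winter's form, valid for sub-normalised states, which states that $|H(Z|L,E)_\rho - H(Z|L,E)_{\tilde\rho}| \le \delta \log_2 d_Z + (1+\delta)\, h_2\!\big(\delta/(1+\delta)\big)$, where $\delta$ is (a bound on) the trace distance $\Delta(\rho_{ZLE}, \tilde\rho_{ZLE})$. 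This is exactly the functional form of the correction constant $c_{\corr}$, so the whole problem becomes that of controlling $\delta$ by the statistics of the test rounds. By the data-processing inequality for the trace distance under Bob's measurement channel and the partial trace over Alice, it suffices to bound $\Delta(\rho_{ABE}, \tilde\rho_{ABE})$, i.e.\ the disturbance caused by projecting $B$ onto the low-photon subspace $N_0$.

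Next I would translate the weight removed by truncation into the observable $q_\top$. Writing $w = 1 - \Tr[\tilde\rho_{AB}]$ for the lost weight, a gentle-measurement computation in the block decomposition of $\rho$ with respect to $N_0$ and $\mathds{1}-N_0$ shows that the off-diagonal coherences are controlled by $\sqrt{w(1-w)}$; optimising the trace norm of the difference operator over all states with fixed $w$ reduces to an eigenvalue problem for a small effective matrix, which yields the closed-form bound $\delta_c$ (as a function of $w_c = \kappa\nu_c$) and, by the same mechanism, the constraint offsets $\xi_L,\xi_U$ on the score expectations. The constant $\kappa$ then enters through a photon-number witness: for a Fock state $\ket{n}$ with $n \ge \nmax+1$, heterodyne detection returns $W_j > \wmax$ (outcome $\top$) with probability at least $\Gamma(\nmax+2,\wmax)/\Gamma(\nmax+2) = 1/\kappa$, since this probability is monotone in $n$. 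Hence every component of $\rho$ outside the range of $N_0$ fires $\top$ with probability at least $1/\kappa$, so $w \le \kappa\, q_\top$. This simultaneously yields the normalisation constraint $\Tr[\tilde\rho_{AB}] \ge 1 - \kappa q_\top$, the bound on Alice's (trusted) marginal $\Delta(\tilde\rho_A, \Tr_Q\ketbra{\Psi}{\Psi}_{AQ}) \le \tfrac12 \kappa q_\top$, and the form of the score constraints appearing in the finite-dimensional program.

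With these ingredients, the feasible set of the actual (infinite-dimensional) state maps into the feasible set of the truncated program: the score equalities $\Tr[\rho\,\Pi_c] = q_c$ become the two-sided inequalities $q_c - \xi_U(q_\top) \le \Tr[\tilde\rho_{AB}\,\tilde\Pi_c] \le q_c - \xi_L(q_\top)$, and the weight and marginal constraints are as above. Because the GEAT requires an affine min-tradeoff function, the remaining step is to linearise the nonlinear dependence on $q_\top$. I would replace $g_{\corr}$, $\xi_L$ and $\xi_U$ by their tangent lines at the chosen points $\nu_c,\nu_L,\nu_U$; validity of the resulting one-sided bounds is guaranteed by the concavity of $g_{\corr}$ and $\xi_L$ and the convexity of $\xi_U$ in $q_\top$, which is precisely what restricts the admissible ranges $\nu_c \in (0, 2/3\kappa)$, $\nu_L \in (0,(5+\sqrt5)/10\kappa]$, $\nu_U \in (0,(5-\sqrt5)/10\kappa]$ (the endpoints being where the relevant curvature changes sign or the radicands vanish). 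Dualising the finite-dimensional optimisation and linearising gives the affine $\tilde g(q)$, and subtracting the affine correction yields $g(q) = \tilde g(q) - g_{\corr}^{(\nu_c)}(q_\top)$ with the claimed coefficients.

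The main obstacle I anticipate is the tightness of the trace-distance-to-weight conversion, i.e.\ deriving the exact $\delta_c$ (and $\xi_L,\xi_U$): one must solve for the worst-case trace norm of $\rho - N_0\rho N_0$ over all states with a prescribed overlap $\Tr[\rho N_0]$, which is where the nested radicals originate and where a loose bound would directly inflate the key-rate penalty. A secondary but delicate point is the bookkeeping for sub-normalised states throughout---ensuring the continuity bound, the gentle-measurement estimates, and the data-processing step all remain valid without implicitly renormalising---and verifying that every correction genuinely depends on $q_\top$ alone (and not on finer features of the state), which is what makes the correction compatible with the affine min-tradeoff construction demanded by the GEAT.
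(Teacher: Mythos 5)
Your overall route coincides with the paper's: truncate Bob's system with $N_0$, convert the truncation weight into the observable $q_{\top}$ via the operator inequality $N_1 \preceq \kappa V_1$ (your ``photon-number witness'' argument is exactly the inequality the paper imports from Kanitschar et al.), compute the pure-state trace distance exactly as a $2\times 2$ eigenvalue problem instead of invoking the generic gentle-measurement bound, apply the sub-normalised continuity bound for conditional entropy with correction $\delta\log_2 d_Z + (1+\delta)h_2\bigl(\delta/(1+\delta)\bigr)$, use data processing to pull the trace-distance bound through Bob's measurement, and linearise everything by tangent lines justified by concavity/convexity. All of these ingredients match the paper's proof.

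There is, however, one genuine gap: you claim the constraint offsets $\xi_L,\xi_U$ follow ``by the same mechanism'' as the eigenvalue computation that yields $\delta_c$. They do not. The quantity to control there is $\Tr[\rho\Pi_c]-\Tr[\rho_0\Pi_{c,0}]=\Tr[\rho_h\Pi_{c,h}]+2\Tr[\rho_{\mathrm{cross}}\Lambda^\dagger]$, where \emph{both} the state and the projector $\Pi_c$ have off-diagonal blocks with respect to the $N_0/N_1$ splitting, so the cross term involves the interplay of the two and is not captured by the trace-norm estimate on the state alone. The paper needs a separate argument: a Schur-complement/contraction factorisation ($X=A^{1/2}KB^{1/2}$ with $K^\dagger K\preceq \1$) applied to both blocks, the Cauchy--Schwarz inequality for traces to obtain $\abs{\Tr[\rho_{\mathrm{cross}}\Lambda^\dagger]}\leq\sqrt{\Tr[\rho_0\Pi_{c,0}]\,\Tr[\rho_h\Pi_{c,h}]}$, and then a worst-case optimisation over how the high-energy weight is allocated; the thresholds $(5\pm\sqrt{5})/10$ are critical points of that optimisation, not (as you state) points where ``curvature changes sign or the radicands vanish'' --- likewise $2/(3\kappa)$ marks where $\delta(w)$ stops being monotone, not a curvature change. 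Relatedly, your two-sided constraint is transposed: since $\xi_U\leq 0\leq\xi_L$, the correct statement is $q_c-\xi_L(q_{\top})\leq\Tr[\tilde{\rho}_{AB}\tilde{\Pi}_c]\leq q_c-\xi_U(q_{\top})$; as written, your interval $[\,q_c-\xi_U,\;q_c-\xi_L\,]$ is empty.
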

\begin{proof}
    Refer to Appendix \ref{annex: dimension reduction}.
\end{proof}

\end{widetext}

Theorem~\ref{thm: dimension reduction} allows us to efficiently construct the function $g(q)$ since constructing $\tilde{g}(q)$ involves an optimisation over finite dimensional quantum states. This can be done using numerical techniques~\cite{araujo2023quantum, winick2018reliable}. In particular, the above optimisation can be solved using SDP and since the constraints are linear in $q$, the dual solution of the SDP will readily provide us with an affine function. We shall show the explicit SDP and its corresponding dual solution in the next two subsections. This completes the min-tradeoff construction.

\subsubsection{SDP method to bound the conditional entropy}\label{sec: SDP to bound entropy}
The minimisation of the conditional von Neumann entropy in Eq.~\eqref{eq: finite dimensional optimisation} involves non-linear objective function which can be difficult to solve using readily available numerical solvers. Thankfully, one can lower bound the conditional von Neumann entropy as SDP using the method proposed in Ref.~\cite{araujo2023quantum} which is based on the bound in Ref.~\cite{brown2021device}. Before we use the bound, we first simplify the minimisation by only keeping the term corresponding to the generation rounds and the case in which the outcome is not discarded
\begin{align}
    H(Z|L,E)_{\tilde{\rho}} &\geq (1-\gamma) \Pr[Z \neq \varnothing| T = 0] \nonumber\\
    &\hspace{1cm} H(Z|T = 0, Z \neq \varnothing, E)_{\tilde{\rho}} \\
    &=: H(Z|E)_{\rho^*},
\end{align}
where $\rho^*_{ZE} = \Tr_T[\Pi_{\pass} \tilde{\rho}_{ZTE} \Pi_{\pass}]$, with $\Pi_{\pass} = \sum_{z \neq \varnothing} \ketbra{z}{z}_Z \otimes\ketbra{0}{0}_T \otimes  \1_E$, is the sub-normalised state corresponding to the case $T = 0$ and $Z \neq \varnothing$.

We then convert the conditional von Neumann entropy into quantum relative entropy. Let $\rho^*_{ZE}$ be a finite dimensional (sub-normalised) cq-state and $\rho^*_E = \Tr_Z[\tilde{\rho}^*_{ZE}]$. Then,
\begin{equation}
    H(Z|E)_{\rho^*} = -D(\rho^*_{ZE}|| \1_Z \otimes \rho^*_{E})
\end{equation}
Then, Ref.~\cite{brown2021device} formulated an upper bound on the quantum relative entropy (which gives a lower bound on the conditional von Neumann entropy) based on Gauss-Radau quadrature. Let $m \geq 2$ be an integer and let $(w_i, t_i)$ be weights and nodes of the Gauss-Radau quadrature with fixed node at $t = 1$. Let $\Lambda_i \in \mathsf{B}(\cH_{ZE})$ be an arbitrary bounded operator in the Hilbert space $\cH_{ZE}$ for all $i \in [m]$. The conditional von Neumann entropy is lower bounded by
\begin{multline}
    \inf_{\{\Lambda_i\}_i, \tilde{\rho}} \sum_{i = 1}^m \frac{w_i}{t_i \ln 2} \\
    \Tr\Bigg[\rho^*_{Z E} \left(\1_{ZE} + \Lambda_i + \Lambda_i^\dagger + (1-t_i) \Lambda_i^\dagger \Lambda_i\right)\\ + t_i
    (\1_Z \otimes \rho^*_E) \Lambda_i \Lambda_i^\dagger\Bigg]
\end{multline}
Since $\cH_Z$ is finite dimensional, the above bound can be simplified by writing $\Lambda_i = \sum_{z,z'} \ketbra{z}{z'} \otimes \Lambda_i^{(z,z')}$. Furthermore, we write
\begin{multline*}
    \tilde{\rho}_{ZTE} = \sum_{z, t} \Pr[T = t] \,\ketbra{z}{z}_Z \otimes \ketbra{t}{t}_T \\ \otimes \Tr_{AB}[\tilde{\rho}_{ABE} (\1_A \otimes P_{z|t} \otimes \1_{E})]
\end{multline*}
where $P_{z|t}$ is Bob's POVM element. From our definition of the state $\rho^*_{ZE}$, this results in 
\begin{multline}
    H(Z|E)_{\rho^*} \geq (1-\gamma) \inf_{\tilde{\rho}, \{\Lambda_{i,z}\}_{i,z}} \sum_{i = 1}^m \frac{w_i}{t_i \ln 2}
   \Tr\Bigg[\tilde{\rho}_{ABE} \\ \Bigg(P_{PS} \otimes \1_E + \sum_{z \neq \varnothing} (\1_A \otimes P_{z|0}) \otimes (\Lambda_{i,z} + \Lambda_{i,z}^\dagger + (1-t_i) \Lambda_{i,z}^\dagger \Lambda_{i,z}) \\+ t_i  
    P_{PS} \otimes \Lambda_{i,z} \Lambda_{i,z}^\dagger \Bigg)\Bigg],
\end{multline}
where $P_{PS} = \sum_{z \neq \varnothing} \1_A \otimes P_{z|0}$

Since the optimisation is over both $\tilde{\rho}$ and $\Lambda$, the problem is still non-linear. However, as shown in Ref.~\cite{araujo2023quantum}, the problem can be linearised by defining the map $\Xi[M] = \Tr_E[\tilde{\rho}_{ABE} \cdot (\1_{AB} \otimes M^T)]$ for any operator $M$, which has the following properties
\begin{equation}
\begin{split}
    \Xi[M^\dagger] &= \Xi[M]^\dagger\\
    \Tr[\tilde{\rho}_{ABE} \cdot (K_{AB} \otimes M^T)] &= \Tr[K_{AB} \cdot \Xi(M)]     
\end{split}
\end{equation}
for any operator $K_{AB}$. We then define the matrices
\begin{equation}
    \begin{split}
        \sigma &= \Xi[\1]\\
        \omega_{i, z} &= \Xi[\Lambda_{i,z}]\\
        \eta_{i, z} &= \Xi[\Lambda_{i,z}^
        \dagger \Lambda_{i,z}]\\
        \theta_{i,z} &= \Xi [\Lambda_{i,z} \Lambda_{i,z}^\dagger]
    \end{split}
\end{equation}
and consider block moment matrices. The resulting lower bound of the conditional entropy is given by
\begin{widetext}
\begin{equation} \label{eq: final SDP}
\begin{split}
    \inf_{\sigma, \{\omega_{i,z}, \eta_{i,z}, \theta_{i,z}\}_{i,z}, \zeta_1, \zeta_2}  & (1-\gamma) \sum_{i = 1}^m \frac{w_i}{t_i \ln 2} \Tr\left[\sigma  P_{PS} + \sum_{z \neq \varnothing}  (\1_A \otimes P_{z|0}) (\omega_{i,z} + \omega_{i,z}^\dagger + (1-t_i) \eta_{i,z}) + t_i \theta_{i,z}  P_{PS} \right] \\
    \text{s.t.} \qquad & \Tr[\sigma] \leq 1\\
    & \Tr[\sigma] \geq 1 - \kappa q_{\top}\\
    & \Tr[\zeta_1 + \zeta_2] \leq \kappa q_{\top}\\ 
    & \Tr[\sigma \tilde{\Pi}_c] \leq q_c - \xi_U(q_{\top}) \quad \forall c \in \cC \setminus \{\perp\}, \\
    & \Tr[\sigma \tilde{\Pi}_c] \geq q_c - \xi_L(q_{\top}) \quad \forall c \in \cC \setminus \{\perp\},\\
    & \begin{pmatrix}
        \sigma & \omega_{i, z}\\
        \omega_{i, z}^\dagger & \eta_{i, z}
    \end{pmatrix} \succeq 0 \quad \forall i, z\\ 
    &\begin{pmatrix}
        \sigma & \omega_{i, z}^\dagger\\
        \omega_{i, z} & \theta_{i, z}
    \end{pmatrix} \succeq 0 \quad \forall i, z\\
    &\begin{pmatrix}
        \zeta_1 & \Tr_B[\sigma] - \Tr_Q [\ketbra{\Psi}{\Psi}]\\
        \Tr_B[\sigma] - \Tr_Q[\ketbra{\Psi}{\Psi}] & \zeta_2
    \end{pmatrix} \succeq 0
\end{split}
\end{equation}
\end{widetext}

\subsubsection{Dual solution}
We take the optimisation~\eqref{eq: final SDP} and associate to each constraint that depends on $q$ with a dual variable $\lambda$
\begin{align*}
    \lambda_{\text{norm}}: \quad& 1 - \Tr[\sigma] \leq \kappa q_{\top}\\
     \lambda_{\text{dist}} : \quad&\Tr[\zeta_1 + \zeta_2] \leq \kappa q_{\top}\\ 
    \lambda_c^U: \quad &\Tr[\sigma \tilde{\Pi}_c] \leq q_c - \xi_U(q_{\top})\\ 
    \lambda^L_c: \quad & \Tr[\sigma \tilde{\Pi}_c] \geq q_c - \xi_L(q_{\top})  
\end{align*}

Then, the dual solution of the SDP~\eqref{eq: final SDP} will have the form
\begin{multline}
    \tilde{g}'(q) \geq \varphi - \lambda_{\text{norm}} \kappa q_{\top} - \lambda_{\text{dist}} \kappa q_{\top}\\
    + \sum_{c \neq \perp, \top} \left(- \lambda_c^U(q_c - \xi_U(q_{\top})) + \lambda_c^L(q_c - \xi_L(q_{\top})) \right) \\- \lambda_{\top}^U (q_{\top} - \xi_U(q_{\top})) + \lambda_{\top}^L ( q_{\top} - \xi_L(q_{\top})),
\end{multline}
where $\varphi$ contains all the dual variables that are associated to the non-statistical constraints and the forms are chosen such that all dual $\vbf{\lambda} \geq0$.
The dual solution is linearised by replacing the terms $\delta$, $\xi_L$ and $\xi_U$ with linear terms
\begin{equation}
\label{eq: linearisation method}
\begin{split}
    \xi_U(q_{\top}) &\geq m_U q_{\top} + c_U\\
    \xi_L(q_{\top}) &\leq m_L q_{\top} + c_L,
\end{split}
\end{equation}
as presented in Eq.~\eqref{eqn: linear xi_U} and Eq.~\eqref{eqn: linear xi_L} respectively.
Then, we have
\begin{multline}
\label{eq: dual function main}
    \tilde{g}(q) \geq \varphi' + \sum_{c \neq \perp, \top} (-\lambda^U_c + \lambda_c^L) q_c \\
    + q_{\top} \Bigg( -\kappa \lambda_{\text{norm}} - \kappa \lambda_{\text{dist}} - \lambda^U_{\top} + \lambda^L_{\top} \\+ \sum_{c \neq \perp} (\lambda_c^U m_U - \lambda_c^L m_L)\Bigg),
\end{multline}
with
\begin{equation}
    \varphi' = \varphi + \sum_{c \neq \perp} (\lambda_c^U c_U - \lambda_c^L c_L).
\end{equation}
Finally, considering the $g_{\corr}$ term, we have
\begin{equation}
    g(q) \geq \Phi + \sum_{c \neq \perp} \lambda'_c \, q_c
\end{equation}
where
\begin{align}
    \Phi &= \varphi' - c_{\corr} + m_{\corr} \nu_c \nonumber\\
    &= \varphi + \sum_{c \neq \perp} (\lambda_c^U c_U + \lambda_c^L c_L) \nonumber - c_{\corr} + m_{\corr} \nu_c ,
\end{align}
and
\begin{widetext}
    \begin{equation}
    \lambda'_c =
    \begin{cases}
        \lambda_c^L - \lambda_c^U & \text{if $c \neq \perp, \top$},\\
        -\lambda_{\top}^U + \lambda_{\top}^L - \kappa \lambda_{\text{norm}} - \kappa \lambda_{\text{dist}} + \sum_{c \neq \perp} (-\lambda_c^L m_L + \lambda_c^U m_U) - m_{\corr} & \text{if $c = \top$}.
    \end{cases}
\end{equation}
\end{widetext}
Denote $\lambda'_{\max} := \max_{c} \lambda'_c$ and $\lambda'_{\min} := \min_c \lambda'_c$. We have the following min-tradeoff function
\begin{equation} \label{eq: min-tradoff conversion}
\begin{split}
    f(\vbf{\hat{e}}_c) &= \Phi -  \frac{(1-\gamma)}{\gamma} \lambda'_{\max} + \frac{\lambda'_c}{\gamma}, \qquad  c \in \cC \setminus \{\perp\},\\
    f(\vbf{\hat{e}}_{\perp}) &= \Phi + \lambda'_{\max}.
\end{split}
\end{equation}
The min-tradeoff function $f$ has the following properties
\begin{align}
    \Max[f] &= \Phi + \lambda'_{\max}\\
    \Min_{\Sigma}[f] &\geq \Phi + \lambda'_{\min}\\
    \Var[f] &\leq \frac{(\lambda'_{\max} - \lambda'_{\min})^2}{\gamma}
\end{align}

Having constructed the min-tradeoff function $f$ and calculate the its corresponding $\Max, \Min_{\Sigma}$, and $\Var$, we can then apply the GEAT to bound the conditional smooth min-entropy $H_{\min}^{\esmooth}(\vb{Z}|\vb{L}, E)_{\rho_{\vb{Z}\vb{L}E|\Omega}}$. Combined with the argument in Sec~\ref{subsec: leftover hashing}, this completes the proof for the secrecy of the protocol.
\section{Simulation} \label{sec: simulation}
To study the performance of the protocol which is secure against coherent attacks, we simulated the key generation rate of the protocol with different number of signals, $N$, sent.
For simplicity, an ideal error correcting code is utilised, which is assumed to correct errors by communicating information at the Shannon limit, with information loss $NH(X_j|Z_j)_{\rho^*}$ and without failure, $\ecom^{\mathrm{EC}}=0$.
We consider a protocol run with Bob performing ideal heterodyne detection with zero excess noise, $\chi=0$, and unit detection efficiency.
More details of the model can be found in Appendix~\ref{annex: model}.
Based on preliminary optimisation, we choose the discretisation $\tau_{\min}^{\mathrm{key}}=0.6$, $\tau_{\min}=1.5$ and $\tau_{\max}=\sqrt{20}$.
Furthermore, we fix the completeness parameter for parameter estimation as $\ecom^{\mathrm{PE}}=$ \SI{1e-10}{}, correctness parameter as $\ecor=$ \SI{1e-15}{}, secrecy parameter as $\esec=$ \SI{1e-6}{}, and $n_{\max}=$ 12.
The correctness parameter results in $l_{EV}=$ \SI{50}{} bits error verification string. 

To obtain the dual $\tilde{g}(q)$, i.e. values for $\varphi$, and dual parameters $\vbf{\lambda}$, one can solve the SDP in Eqn.~\eqref{eq: final SDP} with any set of trial parameters $q_{\mathrm{dual}}$.
The dual parameters and $\varphi$ obtained can compute a valid dual function $\tilde{g}^{(q_{\mathrm{dual}})}(q)$ of the form in Eqn.~\eqref{eq: dual function main}.
Importantly, this dual function is a lower bound of the dual function with trial parameters matching the actual parameters, $\tilde{g}^{(q_{\mathrm{dual}})}(q)\leq\tilde{g}^{(q)}(q)$, by definition of the dual.
In general, we are free to choose the trial parameters $q_{\mathrm{dual}}$ to optimise the key rate (note that $q_{\mathrm{dual}}=q$ may not be the optimal choice for EAT).
However, to simplify, we would consider only the set of parameters $q_{\mathrm{dual}}$ that can be generated from a run of the protocol with the same model as described in the previous paragraph, except that the excess noise can be $\chi_{\mathrm{dual}}\neq\chi$.

\begin{figure}
    \centering
    \includegraphics[width=0.47\textwidth]{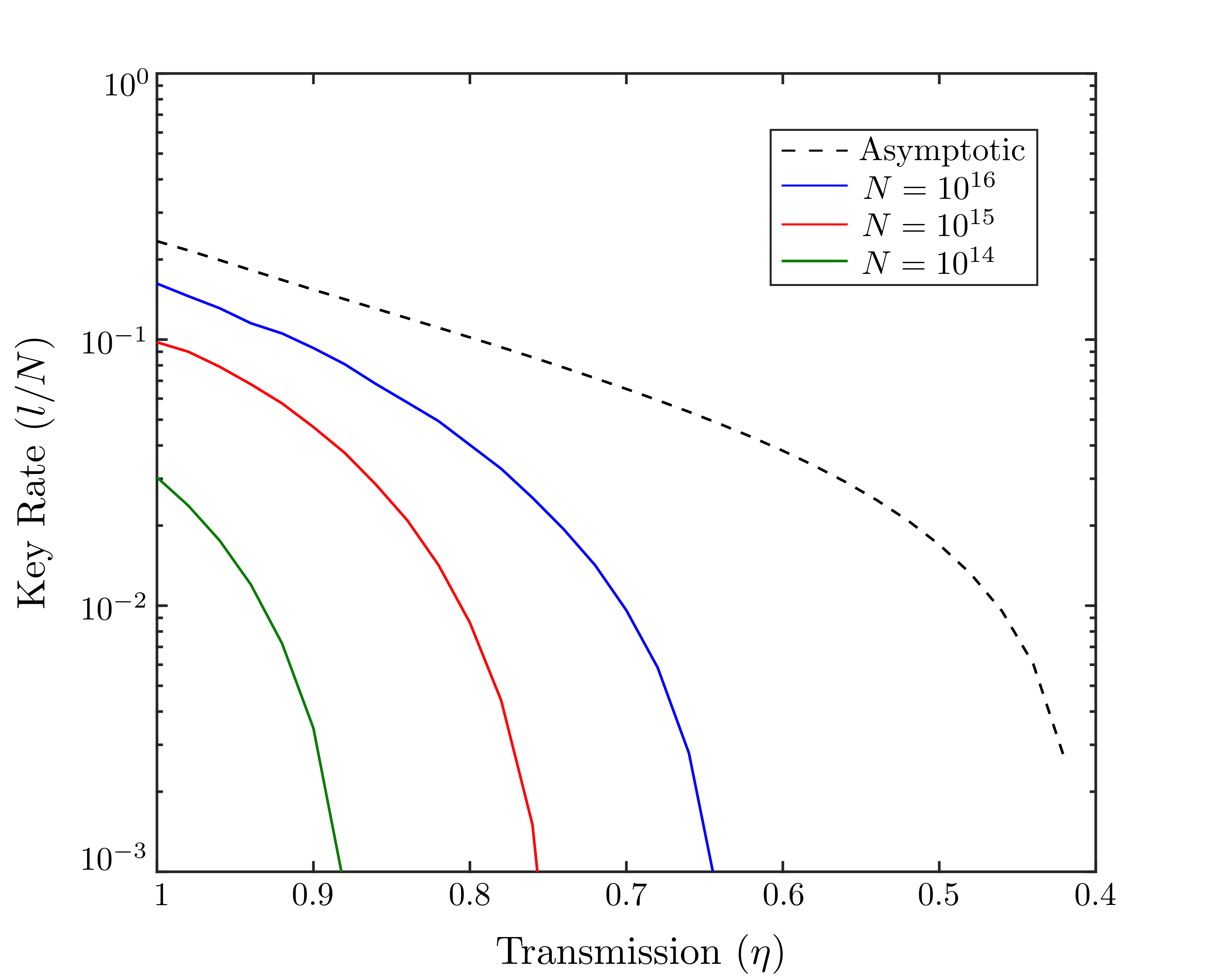}
    \caption{Plot of key rate against transmission, for $N=$ \SI{1e14}{}, \SI{1e15}{}, \SI{1e16}{}, with the asymptotic key rate plot in dotted line.}
    \label{fig:Simulation}
\end{figure}

The key rate, $\ell/N$, where $\ell$ is the length of the key, is then optimised over the amplitude $\alpha$, the EAT parameter $\beta$, the linearisation parameters $\nu_c$, $\nu_L$, and $\nu_U$, the trial parameter $\chi_{\mathrm{dual}}$, and the various components of the completeness parameter, $\ecom^{\mathrm{PE},c}$.
Fig.~\ref{fig:Simulation} shows the optimised key rate for $N=$ \SI{1e14}{} to \SI{1e16}{}, and the asymptotic key rate.
There is a significant penalty from finite size effect, with the key rate for finite block size being much lower than the asymptotic key rate.

\section{Discussion and conclusion} \label{sec: conclusion}
The results show a significant gap between the finite size key rate and asymptotic key rates, even for large $N=$ \SI{1e16}{}. There may be a couple of explanations for this. Firstly, it is possible that our optimisation heuristics for the choice of $q_{\mathrm{dual}}$ might be far from optimal since we only consider min-tradeoff functions that are parameterised by $\chi_{\mathrm{dual}}$ (see Appendix~\ref{annex: min-tradeoff} for more details). While it is possible to consider a more exhaustive optimisation for $q_{\mathrm{dual}}$ in principle, this is challenging to implement in practice since it is computationally expensive to solve the SDP in Eqn.~\eqref{eq: final SDP}.
This might potentially be improved by adopting the non-symmetric cone approach to bounding conditional von Neumann entropy~\cite{Lorente2024quantum}.

It is also possible that this variant of DM-CV-QKD protocol has significant finite-size penalties when analysed using GEAT. This may be the consequence of the linearisation terms. In general, the probability of exceeding $\tau_{\max}$ is small, indicating a low probability of having large energy. As such, the choice of linearisation parameters $\nu_U$ and $\nu_L$ would be correspondingly small as well to have a small gap between the actual value (e.g. $\xi_L$) and linearised value (e.g. $\hat{\xi}_L^{(\nu_L)}$). Since this term is present in the denominator of the gradient, it leads to a large gradient and causes the penalty from GEAT to be significant. As the finite-size penalty in both EAT and GEAT are very sensitive to the gradient of the min-tradeoff function, we expect that the huge finite-size penalty is a feature of the DM-CV-QKD protocol which estimates the dimension reduction correction using the probability of exceeding $\tau_{\max}$. Therefore, to improve the finite-size penalty, we may require a new technique that is less sensitive to the gradient of the min-tradeoff function. Alternatively, we may require another method to estimate the dimension reduction corrections. A possible approach is to investigate continuity bounds that are based on fidelities instead of trace distances. For example, in the case for quantum-classical states (instead of the case for classical-quantum states that we need here), it is known that the continuity bound based on fidelity~\cite{liu2023lipschitz} has a better correction term. Another potential room for improvement is to find corrections to the statistical constraints that have gentler gradients at $q_{\top} \approx 0.$ Recently, Ref.~\cite{arqand2024generalized} proposed the R\'enyi entropy version of the GEAT. Interestingly, this version of the entropy accumulation theorem does not require the usual analysis via affine min-tradeoff functions, which removes the necessity of linearising the correction terms. This approach also removes the need to find the best $q_{\mathrm{dual}}$, which might potentially reduce the minimum block size $N$.

\begin{figure}
    \centering
    \includegraphics[width=0.47\textwidth]{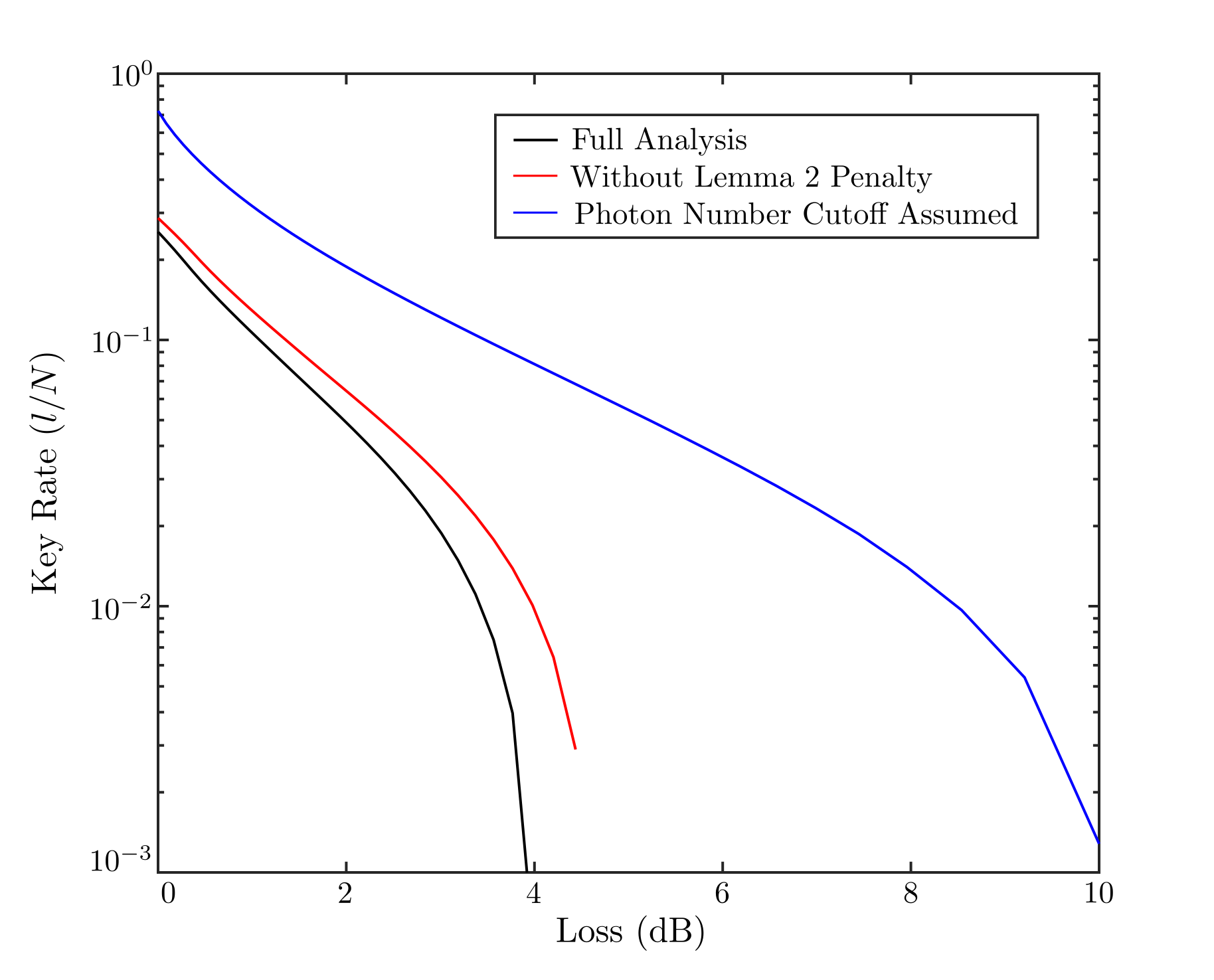}
    \caption{Plot of asymptotic key rate against loss with (1) full analysis (shown by the black curve), (2) removal of the penalty term from the continuity of conditional entropy in Lemma~\ref{lemma: continuity entropy} in Appendix~\ref{annex: dimension reduction} (shown by the red curve), and (3) additional removal dimension reduction corrections to the SDP constraints (shown by the blue curve).}
    \label{fig:Penalty_Compare}
\end{figure}

The asymptotic key rate also appears to be significantly worse than in Refs.~\cite{upadhyaya2021dimension, bauml2023security}, where the loss tolerance is over \SI{40}{\decibel} ($>200$~km with perfect detector) compared to around \SI{4}{\decibel} here. This is likely attributable to both the number of nodes  $m$ (for the Gauss-Radau quadrature approximation) selected in SDP optimisation, and the dimension reduction penalty (we note that the asymptotic key rate evaluated in Ref.~\cite{bauml2023security} does not account for the dimension reduction penalty).

In our simulation, we chose $m=4$ for the Gauss-Radau quadrature approximation to reduce the computation time. This choice may not be tight and may lead to a loss of key rate. This can be improved by adopting the non-symmetric cone approach for conditional von Neumann entropy~\cite{Lorente2024quantum} which avoids the Gauss-Radau approximation. Besides the penalty term $g_{\mathrm{corr}}^{(\nu_c)}(q_{\top})$, the corrections in the SDP constraints may not be tight either. In particular, the corrections to the statistical constraints do not take the structure of the measurement operators into account, and hence can be overly pessimistic. Fig.~\ref{fig:Penalty_Compare} shows a plot of the asymptotic key rate with the removal of various penalties. The looseness of the SDP bounds from the dimension reduction penalty appears to reduce the loss tolerance of the protocol significantly (from more than \SI{10}{\decibel} to \SI{4}{\decibel}). Indeed, based on Fig.~\ref{fig:Penalty_Compare}, one can infer that most of the penalties can be attributed to the corrections to the SDP constraints since most of the reduction in the key rate is observed once these corrections are properly accounted for. As such, both would result in a lower key rate by directly reducing the conditional von Neumann entropy and increasing the set of feasible states in the SDP minimisation.

It is also remarkable that in Refs.~\cite{upadhyaya2021dimension, kanitschar2023finite, lin2020trusted}, the dimension reduction theorem does not introduce such significant penalties in the key rate. In these works, the parameter estimation step of the protocol is based on estimation of moments (i.e., the expectation values of the physical observables, such as position or momentum) rather than probability distribution. This seems to suggest that DM-CV-QKD protocols that are based on estimation of moments are more robust against the dimension reduction penalties. However, proof techniques that are based on entropy accumulation typically require us to work with probability distributions instead. Therefore, it may be interesting to explore new approaches that are not based on the currently known entropy accumulation theorems~\cite{dupuis2019entropy, dupuis2020entropy, metger2022generalised, metger2023security, arqand2024generalized}.

Comparing our work with that of Ref.~\cite{bauml2023security}, the protocol that is being considered in both Ref.~\cite{bauml2023security} and ours is very similar, with the exception that we allow post-selection in the key generation rounds. Indeed, the discretisation that we use in the test rounds is almost identical to the one considered in Ref.~\cite{bauml2023security}. Since the asymptotic key rates obtained in Ref.~\cite{bauml2023security} are close to the one obtained when considering the dimension reduction penalty~\cite{upadhyaya2021dimension}, we suspect that either a very large number of nodes $m$ is required for a tight bound on the conditional von Neumann entropy, or the protocol variant that monitors probability distribution instead of moments will incur higher dimension reduction penalty. This can be investigated by computing the conditional von Neumann entropy using the methods in Refs.~\cite{winick2018reliable, hu2022robust}.
Alternative methods of bounding the relative entropy can also be explored, such as that in Ref.~\cite{Lorente2024quantum}.

To summarise, we presented a complete security analysis of a variant of the DM-CV-QKD protocol using four coherent-states and heterodyne detection. Unlike previous works, our security analysis accounts for both finite-size effects as well as dimension reduction. To achieve this, we apply the newly proposed GEAT and we modify the dimension reduction theorem presented in Ref.~\cite{upadhyaya2021dimension} to suit the technical requirements of GEAT. Additionally, we also applied the Gauss-Radau quadrature approximation to the conditional von Neumann entropy to construct the min-tradeoff function, which we need to apply GEAT. Our method is versatile as it can be easily extended to other variants of DM-CV-QKD protocols, and it is also applicable to the scenario where the detector imperfection is characterised and trusted (similar to the one considered in Ref.~\cite{lin2020trusted}). Unfortunately, our result suggests that DM-CV-QKD protocols may suffer from severe finite-size effects when analysed using EAT or GEAT due to the large gradient of the min-tradeoff function as well as increased sensitivity to the dimension reduction penalties. While our bounds can potentially be improved, we leave this investigation for future works.

\section*{Note added}
While preparing this manuscript, Ref.~\cite{pascual2024improved} independently published an improved finite-size analysis of DM-CV-QKD based on the GEAT. In that work, the authors claimed that the four-state DM-CV-QKD protocol (similar to the one analysed in this work) only requires the minimum block size of the order $N \sim 10^8$, which is much smaller than what we have shown in this work. However, the result of Ref.~\cite{pascual2024improved} also assumes the photon number cutoff without properly accounting for the dimension reduction penalties -- which we have shown to be non-negligible. Indeed, the authors remarked that they have not successfully obtained a positive key rate while accounting for the dimension reduction penalties.

Contrary to Ref.~\cite{pascual2024improved}, this work presented a full security proof, which includes the analysis on the dimension reduction penalty. Our analysis is also applicable to the DM-CV-QKD protocol that employs post-selection, while Ref.~\cite{pascual2024improved} focused on the class of protocols without post-selection.

\begin{acknowledgments}
The authors acknowledge funding support from the National Research Foundation of Singapore (NRF) Fellowship grant (NRFF11-2019-0001) and NRF Quantum Engineering Programme 1.0 grant (QEP-P2 and QEP-P3). 
We also thank Ernest Y.-Z. Tan for discussion and his valuable feedback on an earlier version of this manuscript.

Wen Yu Kon and Charles Lim, in their present role at JPMorgan Chase \& Co contributed to this work for information purposes. This paper is not a product of the Research Department of J.P. Morgan, Singapore, or its affiliates. Neither J.P. Morgan, Singapore nor any of its affiliates make any explicit or implied representation or warranty and none of them accept any liability in connection with this paper, including, but limited to, the completeness, accuracy, reliability of information contained herein and the potential legal, compliance, tax or accounting effects thereof. This document is not intended as investment research or investment advice, or a recommendation, offer or solicitation for the purchase or sale of any security, financial instrument, financial product or service, or to be used in any way for evaluating the merits of participating in any transaction.

\end{acknowledgments}

\bibliography{references}

\onecolumngrid
\appendix
\newpage

\section{Proof of dimension reduction theorem} \label{annex: dimension reduction}
The goal of dimension reduction theorem is to reduce the original problem of constructing a min-tradeoff function which involves optimisation over infinite-dimensional quantum states into a simpler problem which only involves optimisation over finite-dimensional quantum states. The idea is similar to a similar dimension reduction theorem presented in Ref.~\cite{upadhyaya2021dimension}: we perform a truncation on Bob's quantum state and calculate the correction terms attributed to this truncation. Before calculating these correction terms, we first present the original, untruncated optimisation problem.

\subsection{The original optimisation problem}
To derive a lower bound on the conditional von Neumann entropy, it is easier to consider a virtual entanglement-based scenario that is equivalent to the actual protocol from the adversary's point-of-view. More precisely, in the virtual scenario, the adversary will have the same classical and quantum-side information and the honest parties will obtain the same classical registers at the end of the procedure.

The virtual entanglement-based protocol that we consider is as follows:
\begin{enumerate}
    \item Alice prepares the state
    \begin{equation}
        \ket{\Psi}_{AQ} = \frac{1}{2} \sum_{x = 0}^{3} \ket{x}_A \otimes \ket{\psi_x}_{Q}
    \end{equation}
    inside her secure lab. 
    \item Alice measures the register $A$ in the computational basis and stores the outcome in the classical register $X$.
    \item Alice sends the quantum register $Q$ to Bob via the untrusted quantum channel. We assigns the output of the quantum channel that Bob receives to the quantum register $B$.
    \item Upon receiving the quantum register $B$ from the untrusted quantum channel, Bob randomly assigns $T \in \{0,1\}$ with probability $\{1-\gamma, \gamma\}$. He then applies heterodyne measurement on the quantum state and, depending on $T$, he applies the appropriate discretisation to obtain the discretised outcome $Z$.
\end{enumerate}

It is easy to see that the above procedure is equivalent to the quantum communication phase of the DM-CV-QKD protocol that we consider in this work. Step 1--2 is equivalent to uniformly choosing a classical information $x$ and encoding it to the quantum state $\ket{\psi_x}$. On the other hand, step 3--4 is identical to the one performed in the actual protocol. Therefore, the above virtual scenario is indeed equivalent to the actual protocol that we are analysing.

The advantage of considering the entanglement-based scenario is that it allows us to delay Step 2 (Alice's measurement) since local measurements on different subsystems commute. With this delayed measurement, we can consider the quantum state $\rho_{AB}$, shared between Alice and Bob, just before their respective measurement. In general, the Hilbert space $\cH_B$, associated with the register $B$ is infinite-dimensional as we do not restrict what Eve can do in the quantum channel. On the other hand, as the quantum register $A$ is generated and stored in Alice's secure lab, we have the constraint on Alice's marginal state:
\begin{equation}
    \rho_{A} = \Tr_{B}[\rho_{AB}] = \Tr_{Q}[\ketbra{\Psi}{\Psi}_{AQ}].
\end{equation}
On top of this constraint, we have the constraint on the normalisation of the quantum state $\rho_{AB}$
\begin{equation}
    \Tr[\rho_{AB}] = 1,
\end{equation}
the positive-semidefiniteness of the state
\begin{equation}
    \rho_{AB} \succeq 0,
\end{equation}
and the constraints due to the statistics that are being monitored in the test rounds of the protocol
\begin{equation}
    \Tr[\rho_{AB} \Pi_c] = q_c, \qquad \forall c \in \cC \setminus \{\perp\}.
\end{equation}
where $\Pi_c$ is the POVM element associated to the score $C = c$.

Putting all these things together, we have the following optimisation problem
\begin{equation} \label{eq: annex original optimisation}
    \begin{split}
        g(q) = \inf_{\rho_{AB}} \quad & H(Z|L,E)_{\cN[\rho_{AB}]}\\
        \text{s.t.} \quad & \rho_{AB} \succeq 0,\\
        &\Tr[\rho_{AB}] = 1,\\
        &\Tr_{B}[\rho_{AB}] = \Tr_Q[\ketbra{\Psi}{\Psi}_{AQ}]\\
        & \Tr[\rho_{AB} \Pi_c] = q_c, \qquad \forall c \in \cC \setminus \{\perp\}.
    \end{split}
\end{equation}
Here, $\cN$ is the quantum channel that describes Bob's measurement (which converts the register $B$ to $Z$), the announcement of register $L$ (which depends on $T$ and $Z$), and lastly, tracing out Alice's register $A$. Without loss of generality, we assume that the quantum side-information $E$ that is available to Eve is the purification of the state $\rho_{AB}$. As mentioned earlier, the above optimisation involves an optimisation over infinite-dimensional quantum state $\rho_{AB}$ since $\cH_B$ is infinite-dimensional. Our strategy is to truncate the Hilbert space dimension by projecting $\rho_{AB}$ into a finite-dimensional subspace of the full Hilbert space, and calculate the correction terms due to this truncation. The rest of Appendix~\ref{annex: dimension reduction} is dedicated to these calculations.

\subsection{Bounding the trace distance between truncated and original state}
The most important ingredient to obtain the correction terms due to the Hilbert truncation is an upper bound on the trace distance between the truncated state and the original state. We let $\rho_{ABE}$ be the state shared between Alice, Bob and Eve. Without loss of generality, we can assume $\rho_{ABE}$ to be a pure state: there exists a normalised vector $\ket{\psi}_{ABE}$ such that $\rho_{ABE} = \ketbra{\psi}{\psi}_{ABE}$. Just like in the main text, we fix $\nmax \in \mathbb{N}$ and denote
\begin{equation}
    N_0 = \sum_{n = 0}^{\nmax} \ketbra{n}{n}
\end{equation}
as the projection to the ``low energy subspace'' . Let $N_1 = \1 - N_0$. Our goal is to find some small $\delta$ such that
\begin{equation}
    \Delta(\rho_{ABE}, (\1_A \otimes N_0 \otimes \1_E) \rho_{ABE} (\1_A \otimes N_0 \otimes \1_E)) \leq \delta,
\end{equation}
where $\Delta$ denotes the trace distance.

For a fix $\wmax$, let
\begin{equation} \label{eq: large heterodyne outcome}
    V_1 = \int_{\abs{\beta}^2 > \wmax} \dd^2 \beta \frac{\ketbra{\beta}{\beta}}{\pi}
\end{equation}
be the POVM element corresponding to the ``large heterodyne outcome''. From Appendix A Eq. (A3) of Ref.~\cite{kanitschar2023finite}, the following operator inequality holds.
\begin{equation} \label{eq: operator inequality}
    N_1 \preceq \frac{\Gamma(\nmax + 2, 0)}{\Gamma(\nmax + 2,\wmax)} V_1.
\end{equation}
Here $\Gamma$ denotes the upper incomplete Gamma function.
\begin{equation}
    \Gamma(n, x) = \int_{x}^{\infty} \dd t \, t^{n-1} e^{-t}.
\end{equation}
Consequently, for any state $\sigma$, we have
\begin{equation}
    \Tr[\sigma N_1] \leq \frac{\Gamma(\nmax+2, 0)}{\Gamma(\nmax+2,\wmax)} \Tr[\sigma V_1]
\end{equation}
In passing, we note that there is a slight difference in how we define $N_0$ and $N_1$ as compared to the corresponding quantities in Ref.~\cite{kanitschar2023finite}. In Ref.~\cite{kanitschar2023finite}, the cut-off photon number is defined as $n_c = \nmax + 1$.

Now, the trace distance can be upper bounded in terms of the weight of the state in the ``high energy space'' using the gentle measurement lemma (Lemma 9 of Ref.~\cite{winter1999coding})
\begin{equation*}
\begin{split}
    \Delta(\rho_{ABE}, (\1_A \otimes N_0 \otimes \1_E) \rho_{ABE} (\1_A \otimes N_0 \otimes \1_E))
    &\leq \sqrt{2 \Tr[\rho_{ABE} (\1_A \otimes N_1 \otimes \1_E)]}\\
    &\leq \sqrt{2\frac{\Gamma(\nmax+2, 0)}{\Gamma(\nmax +2,\wmax)}\Tr[\rho_B V_1]} \\
    &= \sqrt{2\frac{\Gamma(\nmax+2, 0)}{\Gamma(\nmax+2,\wmax)}\Pr[W > \wmax]}
\end{split}
\end{equation*}
where the last term can be monitored in the test rounds as $\freq(C = \top)$. Let us introduce the shorthand $\Pr[W > \wmax] = \nu$, and $\kappa = \Gamma(\nmax + 2, 0)/\Gamma(\nmax + 2,\wmax)$. We can write $\delta = \sqrt{2\kappa \nu}$.

The gentle measurement lemma given in Ref.~\cite{winter1999coding} applies to any state and measurement. However, for our case, we can tighten the bound as we are dealing with a special case of gentle measurement. First, notice that $\ket{\psi}_{ABE}$ can be further decomposed as
\begin{equation}
    \ket{\psi}_{ABE} = \sqrt{1-w} \ket{\psi_0} + \sqrt{w} \ket{\psi_1},
\end{equation}
where $w = \bra{\psi} (\1 \otimes N_1 \otimes \1) \ket{\psi}$ is the weight of the and
\begin{equation}
    \begin{split}
        (\1 \otimes N_0 \otimes \1) \ketbra{\psi}{\psi}_{ABE} (\1 \otimes N_0 \otimes \1) &= (1-w) \ketbra{\psi_0}{\psi_0},\\
         (\1 \otimes N_1 \otimes \1) \ketbra{\psi}{\psi}_{ABE} (\1 \otimes N_1 \otimes \1) &= w \ketbra{\psi_1}{\psi_1}.
    \end{split}
\end{equation}
Since $N_0$ and $N_1$ are orthogonal, $\ket{\psi_0}$ and $\ket{\psi_1}$ are also orthogonal. They are also normalised. Then, the trace distance between the truncated state and the original state is given by
\begin{equation}
    \Delta\left(\ketbra{\psi}{\psi}_{ABE}, (1-w) \ketbra{\psi_0}{\psi_0}\right) = \frac{1}{2} \Tr \left[\sqrt{\left(\ketbra{\psi}{\psi} - (1-w) \ketbra{\psi_0}{\psi_0} \right)^\dagger \left(\ketbra{\psi}{\psi} - (1-w) \ketbra{\psi_0}{\psi_0} \right)}\right]
\end{equation}
To compute the trace distance, we first compute
\begin{equation}
    M = \ketbra{\psi}{\psi} - (1-w) \ketbra{\psi_0}{\psi_0} =
    \begin{pmatrix}
        0 & \sqrt{w(1-w)}\\
        \sqrt{w(1-w)} & w
    \end{pmatrix},
\end{equation}
then $M^\dagger M$ is given by
\begin{equation}
    M^\dagger M =
    \begin{pmatrix}
        w(1-w) & w\sqrt{w(1-w)} \\
        w\sqrt{w(1-w)} & w^2 + w(1-w)
    \end{pmatrix}
\end{equation}
The eigenvalues of $M^\dagger M$ are given by
\begin{equation}
    \mathrm{eig}_{\pm}[M^\dagger M] =  \frac{w(2-w) \pm w\sqrt{w(4-3w)}}{2}
\end{equation}
Therefore, the trace distance is given by
\begin{align}
    &\Delta\left(\ketbra{\psi}{\psi}_{ABE}, (1-w) \ketbra{\psi_0}{\psi_0}\right) = \Tr[\sqrt{M^\dagger M}] = \sum_{j = +,-} \sqrt{\mathrm{eig}_j[M^\dagger M]}\\
    &= \frac{1}{2\sqrt{2}} \left[\sqrt{w(2-w) + w\sqrt{w(4-3w)}} + \sqrt{w(2-w) - w\sqrt{w(4-3w)}} \right] =: \delta(w)
\end{align}
The behaviour of the trace distance $\delta$ against the weight for the high energy subspace $w$ is given in Fig.~\ref{fig:delta vs w}.
\begin{figure}[h]
    \centering
    \includegraphics{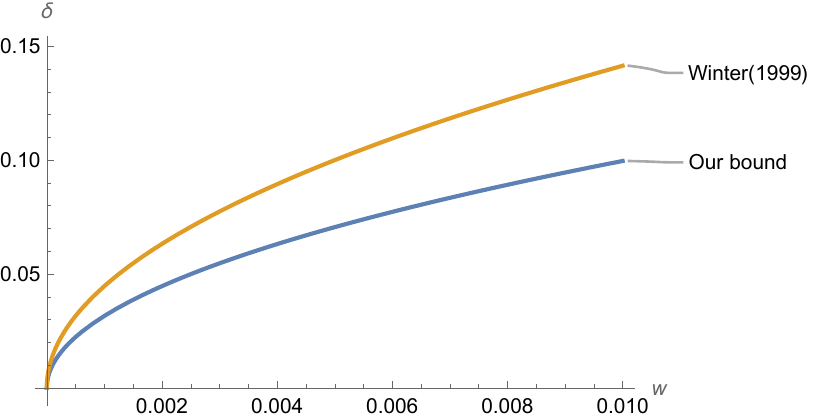}
    \caption{The comparison of the behaviour of $\delta$ vs $w$ for $w \in (0,0.01)$ between our bound and the one in Ref.~\cite{winter1999coding}.}
    \label{fig:delta vs w}
\end{figure}

However, we cannot access the value of $w$ directly since we only have an upper bound $w \leq \kappa \nu$. Since our bound is not monotonously non-decreasing, we need to convert it to a bound that have this monotonicity property. First, we take the first derivative
\begin{equation}
    \frac{\dd \delta}{\dd w} = \frac{(1-w)}{2\sqrt{2}w\sqrt{4-3w}} \left[\frac{3w + \sqrt{w(4-3w)}}{\sqrt{(2-w) + \sqrt{w(4-3w)}}} - \frac{3w - \sqrt{w(4-3w)}}{\sqrt{(2-w) - \sqrt{w(4-3w)}}} \right]
\end{equation}
The pre-factor is non-negative for $0 \leq w \leq 1$. We want the term in the square bracket to be non-negative. From Mathematica, this condition is satisfied if $w \leq 2/3$. Thus, the maximum trace distance is given by $\delta(\frac{2}{3}) = \frac{1}{\sqrt{3}}$. Therefore, as a function of $\nu$, we have
\begin{equation}
    \delta = \begin{cases}
        \frac{1}{2\sqrt{2}} \left[\sqrt{\kappa \nu (2-\kappa \nu) + \kappa \nu \sqrt{\kappa \nu(4-3\kappa \nu)}} + \sqrt{\kappa \nu(2-\kappa \nu) - \kappa \nu \sqrt{\kappa \nu(4-3\kappa \nu)}} \right] & \text{if $\kappa \nu \leq 2/3$}\\
        \frac{1}{\sqrt{3}} & \text{otherwise}.
    \end{cases}
\end{equation}

To construct the min-tradeoff function, we want to linearise the upper bound on the trace distance. First, we check that the function $\delta$ is concave in $w$, so that we can simply take a tangent line. We take the second derivative
\begin{equation}
    \frac{\dd^2 \delta}{\dd w^2} = - \frac{(\sqrt{w} + \sqrt{4-3w})\sqrt{(2-w) - \sqrt{w(4-3w)}} - (\sqrt{w} - \sqrt{4-3w})\sqrt{(2-w) + \sqrt{w(4-3w)}}}{2\sqrt{2}(1-w) [w(4-3w)]^{3/2}}
\end{equation}
Since $\sqrt{w} < \sqrt{4-3w}$ for any $0 \leq w \leq 1$, both the numerator and denominator (excluding the negative sign outside) are positive. Thus, we have $\dd^2 \delta / \dd w^2 < 0$ which means that $\delta$ is concave in the interval $0 \leq w \leq 1$. Since $w = \kappa \nu$ is linear in $\nu$, the trace distance $\delta$ is also concave in $\nu$. 

Therefore, we can simply take a tangent line at any $\nu$ such that $\kappa \nu \leq 2/3$ to derive a linear upper bound on $\delta$. Suppose we take the tangent at $\nu = \nu_0$ with $\nu_0 \in (0, \frac{2}{3\kappa})$, we use the shorthand $w_0 = \kappa \nu_0$
\begin{multline} \label{eq: delta bound in terms of w}
    \delta \leq \frac{(1-w_0)}{2\sqrt{2}\sqrt{4-3w_0}} \left[\frac{3w_0 + \sqrt{w_0(4-3w_0)}}{\sqrt{(2-w_0) + \sqrt{w_0(4-3w_0)}}} - \frac{3w_0 - \sqrt{w_0(4-3w_0)}}{\sqrt{(2-w_0) - \sqrt{w_0(4-3w_0)}}} \right] \frac{(\nu - \nu_0)}{\nu_0}\\
    + \frac{1}{2\sqrt{2}} \left[\sqrt{w_0(2-w_0) + w_0\sqrt{w_0(4-3w_0)}} + \sqrt{w_0(2-w_0) - w_0\sqrt{w_0(4-3w_0)}} \right].
\end{multline}
Therefore,
\begin{equation} \label{eq: annex delta m0 c0}
    \delta \leq m_0 (\nu - \nu_0) + c_0,
\end{equation}
with
\begin{align}
    m_0 &=\frac{(1-w_0)}{2\sqrt{2} \nu_0 \sqrt{4-3w_0}} \left[\frac{3w_0 + \sqrt{w_0(4-3w_0)}}{\sqrt{(2-w_0) + \sqrt{w_0(4-3w_0)}}} - \frac{3w_0 - \sqrt{w_0(4-3w_0)}}{\sqrt{(2-w_0) - \sqrt{w_0(4-3w_0)}}} \right], \label{eq: annex m0}\\
    c_0 &= \frac{1}{2\sqrt{2}} \left[\sqrt{w_0(2-w_0) + w_0\sqrt{w_0(4-3w_0)}} + \sqrt{w_0(2-w_0) - w_0\sqrt{w_0(4-3w_0)}} \right] \label{eq: annex c0}.
\end{align}

\subsection{Continuity of conditional von Neumann entropy}
Now that we have an upper bound on the trace distance between the truncated and the original state, we can use the continuity bound for the von Neumann entropy to prove that the truncation does not significantly reduce the entropy. We use the following Lemma 1 from Ref.~\cite{upadhyaya2021dimension}
\begin{lemma}[Continuity of conditional von Neumann entropy] \label{lemma: continuity entropy}
    Let $\cH_Z$ and $\cH_E$ be two Hilbert spaces, where $\mathrm{dim}(\cH_Z) = d_Z$ while $\cH_E$ can be infinite dimensional. Let $\rho, \sigma \in \mathsf{D}(\cH_Z \otimes \cH_E)$ be subnormalised states with $\Tr[\rho] \geq \Tr[\sigma]$. If $\Delta(\rho, \sigma) \leq \delta$, then
    \begin{equation}
        H(Z|E)_\rho \geq H(Z|E)_\sigma - \delta \log_2 d_Z - (1 + \delta) h_2\left(\frac{\delta}{1+\delta}\right),
    \end{equation}
    where $h_2(x) = -x \log_2(x) - (1-x) \log_2(1-x)$ is the binary entropy function.
\end{lemma}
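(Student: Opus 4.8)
The plan is to prove this as a specialisation of the Alicki--Fannes--Winter continuity bound for the conditional von Neumann entropy, following the flag-extension technique of Winter as adapted to our setting in Ref.~\cite{upadhyaya2021dimension}. Two features of the statement must be respected throughout. First, $\cH_E$ may be infinite-dimensional, so no dimension factor for $E$ is allowed to appear anywhere; this forces the argument to be arranged so that every $E$-dependent quantity cancels before any potentially infinite entropy is invoked. Second, the coefficient of $\log_2 d_Z$ is $\delta$ rather than $2\delta$, which is the saving one obtains from the finite-dimensional (and, in our application, classical) structure of the $Z$ register, so the dimension of $Z$ should be charged only once.

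First I would take the Jordan decomposition $\rho - \sigma = P - Q$ with $P,Q \succeq 0$ of orthogonal support, so that $\Tr[P] + \Tr[Q] = 2\Delta(\rho,\sigma) \le 2\delta$ while $\Tr[P] - \Tr[Q] = \Tr[\rho] - \Tr[\sigma] \ge 0$. I would then introduce a classical flag qubit $F$ and form the extensions $\tilde\rho_{ZEF} = \rho_{ZE}\otimes\ketbra{0}{0}_F + Q\otimes\ketbra{1}{1}_F$ and $\tilde\sigma_{ZEF} = \sigma_{ZE}\otimes\ketbra{0}{0}_F + P\otimes\ketbra{1}{1}_F$. The key observation is that tracing out $F$ yields the \emph{same} operator for both, namely $\Tr_F[\tilde\rho] = \rho + Q = \sigma + P = \Tr_F[\tilde\sigma] =: \mu_{ZE}$, so $\tilde\rho$ and $\tilde\sigma$ share identical $ZE$-marginals and hence $H(Z|E)_{\tilde\rho} = H(Z|E)_{\mu} = H(Z|E)_{\tilde\sigma}$.

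Next I would invoke two standard facts about the conditional entropy with a classical flag, both phrased via the relative-entropy definition $H(Z|E) = -D(\cdot \,\|\, \1_Z \otimes \cdot)$ so that they remain valid for subnormalised states. Because $F$ is classical and the flagged states are block-diagonal in $F$, additivity of the relative entropy over orthogonal blocks gives $H(Z|EF)_{\tilde\rho} = H(Z|E)_{\rho} + H(Z|E)_{Q}$ and likewise for $\tilde\sigma$. Moreover, discarding $F$ from the conditioning can only decrease the conditional entropy by at most $I(Z:F|E)$, which is controlled by the flag entropy and yields the binary-entropy correction term. Chaining these with the shared-marginal equality above, and bounding the off-branch conditional entropies of the classical register $Z$ by $\Tr[\,\cdot\,]\log_2 d_Z$, the $E$-contributions cancel and one is left with $H(Z|E)_\rho \ge H(Z|E)_\sigma - (\text{weight})\log_2 d_Z - (\text{binary-entropy term})$; normalising the flag weights by $\Tr[\mu]=1+\delta$ turns the weight into $\delta$ and the correction into $(1+\delta)\,h_2\!\left(\delta/(1+\delta)\right)$, matching the claim.

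The main obstacle is the bookkeeping in the subnormalised regime. Since $\rho,\sigma,P,Q$ are not normalised, I cannot use the naive difference $H(ZE)-H(E)$ (which is ill-defined for infinite-dimensional $E$ anyway), and must instead work entirely with the relative-entropy form, tracking the signs and magnitudes of the per-branch conditional entropies and the role of the $\Tr[\rho]\ge\Tr[\sigma]$ hypothesis in fixing the direction of the inequality. The delicate point is to ensure that every $E$-dependent term is eliminated through the identity of the two marginals \emph{before} any possibly infinite entropy enters, so that the only surviving dimension charge is $\delta\log_2 d_Z$ rather than $2\delta\log_2 d_Z$; the infinite-dimensionality of $\cH_E$ is then harmless precisely because it appears only through differences of identical marginals.
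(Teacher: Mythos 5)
The paper does not actually prove this lemma: it imports it verbatim as Lemma~1 of Ref.~\cite{upadhyaya2021dimension}, so there is no in-paper argument to compare against. Your reconstruction via the Alicki--Fannes--Winter flag technique is a sensible route, and most of its ingredients are sound in the subnormalised, infinite-dimensional-$E$ setting: the Jordan decomposition with $\Tr[P]+\Tr[Q]\le 2\delta$ and $\Tr[P]\ge\Tr[Q]$ (hence $\Tr[Q]\le\delta$), the equal-marginal flag extensions, block additivity of the relative entropy over the classical flag, strong subadditivity, and the bound $I(Z{:}F|E)_{\tilde\rho}\le (1+\delta)\,h_2\bigl(\delta/(1+\delta)\bigr)$ coming from the flag entropy (using $\Tr[\rho]\le 1$ and $\Tr[Q]\le\delta$).

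However, the step that actually produces the constant $\delta\log_2 d_Z$ is stated in a form that does not work. Chaining your identities gives $H(Z|E)_\rho \ge H(Z|E)_\sigma + H(Z|E)_P - H(Z|E)_Q - I(Z{:}F|E)_{\tilde\rho}$, and ``bounding the off-branch conditional entropies by $\Tr[\cdot]\log_2 d_Z$'', i.e.\ $H(Z|E)_P\ge -\Tr[P]\log_2 d_Z$ and $H(Z|E)_Q\le\Tr[Q]\log_2 d_Z$, only yields $-(\Tr[P]+\Tr[Q])\log_2 d_Z\ge -2\delta\log_2 d_Z$ --- exactly the factor $2$ you set out to avoid. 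What is needed is the \emph{lower} bound $H(Z|E)_P\ge 0$, which holds when $\rho$ and $\sigma$ are classical on $Z$: then $\rho-\sigma$ is block-diagonal in the $Z$ basis, so $P$ and $Q$ inherit the cq structure, and conditional entropy of a cq state is nonnegative; combined with $H(Z|E)_Q\le\Tr[Q]\log_2 d_Z\le\delta\log_2 d_Z$ (this is precisely where the hypothesis $\Tr[\rho]\ge\Tr[\sigma]$ enters), the claimed bound follows. Relatedly, classicality of $Z$ is not a hypothesis of the lemma as stated, and without it the statement is actually false: the factor $2$ in the AFW bound is tight for general states. For instance, take $\sigma$ maximally mixed on $Z\otimes E$ with $d_E=d_Z=d$, and $\rho=(1-\delta')\sigma+\delta'\,\ketbra{\Phi}{\Phi}$ with $\ket{\Phi}$ maximally entangled and $\delta'=\delta/(1-1/d^2)$; then $H(Z|E)_\rho\le(1-2\delta')\log_2 d+h_2(\delta')$, which drops below the claimed lower bound $(1-\delta)\log_2 d-(1+\delta)h_2\bigl(\delta/(1+\delta)\bigr)$ once $\log_2 d$ is large. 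So your instinct to invoke the classical structure of $Z$ ``in our application'' must be promoted to an explicit assumption --- harmless for this paper, since the lemma is only ever applied to states where $Z$ holds a measurement outcome, but necessary both for your proof and for the truth of the statement.
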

Now, let $\cN: ABE \rightarrow ZE$ be the quantum channel that corresponds to Bob's measurement (and tracing out Alice). In this context, the state $\rho$ in the above lemma corresponds to $\rho = \cN[\rho_{ABE}]$ and $\sigma = \cN[(\1_A \otimes N_0 \otimes \1_E) \rho_{ABE} (\1_A \otimes N_0 \otimes \1_E)]$. Because of the monotonicity property of trace distance, we can also re-use the upper bound on the trace distance that we have derived in the previous subsection.

The correction term for the conditional entropy is given by
\begin{equation}
    v(\delta) = \delta \log_2 d_Z + (1 + \delta) h_2 \left(\frac{\delta}{1 + \delta}\right)
\end{equation}
Again, for the purpose of constructing min-tradeoff functions, we want to obtain a bound that is affine in $\nu$. We adopt a similar strategy of taking an upper bound by taking a tangent line at $\nu = \nu_c$ where $\nu_c \in (0, 2/3\kappa)$. We need to calculate the gradient of $v$ against $\nu$. To that end, we calculate the gradient of $v$ against $\delta$ 
\begin{equation}
    \frac{\dd v}{\dd \delta} = \log_2 d_Z + \log_2 \left(\frac{1 + \delta}{\delta} \right).
\end{equation}
On the other hand, from Eq.~\eqref{eq: delta bound in terms of w}, the gradient of $\delta$ against $\nu$ is given by
\begin{equation}
    \frac{\dd \delta}{\dd \nu} = \frac{\dd \delta}{\dd w} \frac{\dd w}{\dd \nu} = \frac{(1-w)}{2\sqrt{2}w\sqrt{4-3w}} \left[\frac{3w + \sqrt{w(4-3w)}}{\sqrt{(2-w) + \sqrt{w(4-3w)}}} - \frac{3w - \sqrt{w(4-3w)}}{\sqrt{(2-w) - \sqrt{w(4-3w)}}} \right] \cdot \kappa,
\end{equation}
where $w = \kappa \nu$ depends implicitly on $\nu$. We introduce the shorthand 
\begin{align}
w_c &= \kappa \nu_c\\
\delta_c &= \frac{1}{2\sqrt{2}}\left[\sqrt{w_c (2 - w_c) + w_c\sqrt{w_c (4 - 3 w_c)}} + \sqrt{w_c (2 - w_c) - w_c \sqrt{w_c (4 - 3 w_c)}} \right]
\end{align}
Then,
\begin{align}
    m_{\corr} = \frac{\dd v}{\dd \nu} \Bigg\vert_{\nu = \nu_c} &= \left(\frac{\dd v}{\dd \delta} \cdot \frac{\dd \delta}{\dd \nu} \right) \Bigg\vert_{\nu = \nu_c}\nonumber\\
    &= \frac{(1-w_c) \kappa}{2\sqrt{2}w_c \sqrt{4 - 3w_c}}  \left[\log_2 d_Z + \log_2 \left(\frac{1 + \delta_c}{\delta_c} \right)\right] \nonumber \\
    & \qquad \times \left[\frac{3 w_c + \sqrt{w_c (4 - 3 w_c)}}{\sqrt{(2 - w_c) + \sqrt{w_c (4 - 3 w_c)}}} - \frac{3w_c - \sqrt{w_c (4 - 3w_c)}}{\sqrt{(2 - w_c) - \sqrt{w_c (4 - 3w_c)}}} \right] 
\end{align}
and
\begin{equation}
    c_{\corr} = v(\delta_c) = \delta_c \log_2 d_Z + (1 + \delta_c) h_2 \left(\frac{\delta_c}{1 + \delta_c}\right).
\end{equation}
This gives a linear correction term to the conditional entropy 
\begin{equation}
    g^{(\nu_c)}_{\corr}(\nu) = m_{\corr}(\nu - \nu_c) + c_{\corr}.
\end{equation}

\subsection{Correction to the constraints}
The projection to the low energy subspace does not only affect the conditional entropy but also the constraints. We consider the entanglement based picture before Bob's measurement and denote the entangled state between Alice and Bob by $\rho_{AB}$ and $\sigma_{AB} = (\1 \otimes N_0) \rho_{AB} (\1 \otimes N_0)$.

\subsubsection*{Normalisation constraint}
The first constraint is the normalisation, this modification is straightforward, we have
\begin{equation}
    1 - \kappa \nu \leq \Tr[\sigma_{AB}] \leq 1.
\end{equation}

\subsubsection*{Constraint on Alice's marginal state}
Next, we have the constraint on Alice's marginal states. Originally, we have the constraint
\begin{equation*}
    \rho_A = \Tr_Q \left[\ketbra{\Psi}{\Psi}_{AQ}\right].
\end{equation*}
However, due to the cut-off, we instead have to constrain the state
\begin{equation*}
    \sigma_A=(1-w)\Tr_{BE}[\ketbra{\psi_0}{\psi_0}_{ABE}].
\end{equation*}
Let us expand the original state $\rho_A$, 
\begin{align*}
    \rho_A &= \Tr_{BE}[\ketbra{\psi}{\psi}_{ABE}]\\
    &= w\Tr_{BE}[\ketbra{\psi_1}{\psi_1}_{ABE}] + \sqrt{w(1-w)}\Tr_{BE}[\ket{\psi_0}\bra{\psi_1}_{ABE}] + \sqrt{w(1-w)}\Tr_{BE}[\ket{\psi_1}\bra{\psi_0}_{ABE}] +  \sigma_A
\end{align*}
By expanding $\ket{\psi}_{ABE}=\sum_ac_a\ket{a}_A\otimes\ket{\phi_a}_{BE}$, the cross-terms can be shown to be
\begin{align*}
    \Tr_{BE}[\ket{\psi_0}\bra{\psi_1}_{ABE}] &= \Tr_{BE}[(\1 \otimes N_0 \otimes \1)\ket{\psi}\bra{\psi}_{ABE}(\1 \otimes N_1 \otimes \1)]\\
    &= \sum_{aa'}c_a(c_a')^*\Tr[(N_0 \otimes \1)\ket{\phi_a}\bra{\phi_a}_{BE}(N_1 \otimes \1)]\\
    &= 0,
\end{align*}
where the final equality notes that the projectors $N_0$ and $N_1$ are orthogonal.
As such, we have the following constraint,
\begin{align*}
    \Delta(\sigma_A,\Tr_Q \ketbra{\Psi}{\Psi}_{AQ}) &= \Delta(\sigma_A,\rho_A)\\
    &= \frac{1}{2} \norm{w\Tr_{BE}[\ketbra{\psi_1}{\psi_1}_{ABE}]}_1\\
    &\leq \frac{1}{2}w,
\end{align*}
where the second line invokes the definition of trace distance, and the third line utilise the fact that the trace norm of a quantum state is bounded by 1.

The trace distance constraint can be re-cast as SDP constraints (Example 1.20 of Ref.~\cite{watrous2018theory}) by defining the matrix $M$
\begin{equation}
    M = \begin{pmatrix}
        \zeta_1 & \sigma_A - \Tr_Q \ketbra{\Psi}{\Psi} \\
        \sigma_A - \Tr_Q \ketbra{\Psi}{\Psi} & \zeta_2
    \end{pmatrix}
\end{equation}
then imposing that $\Tr[M] \leq \kappa\nu$  and $M \succeq 0$.



\subsubsection*{Statistical constraints}
Finally, the most challenging correction is to the statistical constraints. To do this,
we can write the original state $\rho$ as 
\begin{equation} \label{eq: rho blocks}
    \rho = \begin{pmatrix}
        \rho_0 & \rho_c\\
        \rho_c^{\dagger} & \rho_h
    \end{pmatrix}.
\end{equation}
Similarly, we can write the measurement operator as
\begin{equation} \label{eq: Pi blocks}
    \Pi = \begin{pmatrix}
        \Pi_0 & \Lambda \\ 
        \Lambda^\dagger & \Pi_h
    \end{pmatrix}.
\end{equation}
Let $\rho_0$ and $\Pi_0$ be linear operators in the low-energy subspace $\cH_{\text{low}}$. We shall truncate the subspace outside of $\cH_{\text{low}}$. The constraint involves terms of the form
\begin{align*}
    \Tr[\rho \Pi] &= \Tr[\rho_0 \Pi_0] + \Tr[\rho_h \Pi_h] + \Tr[\rho_c \Lambda^\dagger] + \Tr[\rho_c^\dagger \Lambda] \\
    &= \Tr[\rho_0 \Pi_0] + \Tr[\rho_h \Pi_h] + 2 \Tr[\rho_c \Lambda^\dagger].
\end{align*}
The LHS is simply the original constraint which is related to the statistics that we see in the experiment. The second inequality is due to the fact that $\Tr[X] = \Tr[X^\dagger]$ and the cyclic permutation symmetry of trace. Bounding the $\Tr[\rho_h \Pi_h]$ is easy:
\begin{equation}
    0 \leq \Tr[\rho_h \Pi_h] \leq \kappa \nu,
\end{equation}
since $0 \preceq \Pi_h \preceq \1$. Next, we need to bound $\Tr[\rho_c \Lambda^\dagger]$. To do that we need the following lemmas
\begin{lemma}[Schur complement of positive semidefinite matrices] \label{lemma: Schur complement}
Let $A \succeq 0$, $B \succeq 0$ and
\begin{equation*}
    \begin{pmatrix}
        A & X \\
        X^\dagger & B
    \end{pmatrix} \succeq 0
\end{equation*}
be positive semidefinite. Then, we have
\begin{equation*}
    B - X^\dagger A^{-1} X \succeq 0
\end{equation*}
\end{lemma}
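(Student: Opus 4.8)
The plan is to prove this via the standard block-elimination (Schur complement) argument, which reduces the positivity of the $2\times 2$ block matrix to the positivity of its diagonal blocks after a congruence transformation. Throughout I would take $A$ to be invertible so that $A^{-1}$ is well-defined, as the displayed expression implicitly requires; the case of a genuinely singular $A \succeq 0$ is addressed at the end.

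The most direct route is to test the given positive-semidefinite block matrix $M = \left(\begin{smallmatrix} A & X \\ X^\dagger & B \end{smallmatrix}\right)$ against a cleverly chosen vector. For arbitrary vectors $u, v$ of the appropriate dimensions, positive semidefiniteness of $M$ gives
\begin{equation*}
    \begin{pmatrix} u \\ v \end{pmatrix}^\dagger M \begin{pmatrix} u \\ v \end{pmatrix} = u^\dagger A u + u^\dagger X v + v^\dagger X^\dagger u + v^\dagger B v \geq 0.
\end{equation*}
I would then substitute $u = -A^{-1} X v$, the minimiser of this quadratic form in $u$ at fixed $v$. Using that $A^{-1}$ is Hermitian (since $A$ is), the three $A$- and $X$-dependent terms collapse, leaving $v^\dagger (B - X^\dagger A^{-1} X) v \geq 0$. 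As $v$ is arbitrary, this is exactly the claim $B - X^\dagger A^{-1} X \succeq 0$.

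Equivalently, and perhaps more transparently, I could exhibit the congruence explicitly. Setting the invertible block matrix $L = \left(\begin{smallmatrix} I & 0 \\ -X^\dagger A^{-1} & I \end{smallmatrix}\right)$, a direct multiplication yields
\begin{equation*}
    L M L^\dagger = \begin{pmatrix} A & 0 \\ 0 & B - X^\dagger A^{-1} X \end{pmatrix}.
\end{equation*}
Since congruence by an invertible matrix preserves positive semidefiniteness --- for any $w$ one has $w^\dagger (L M L^\dagger) w = (L^\dagger w)^\dagger M (L^\dagger w) \geq 0$ --- the right-hand side is positive semidefinite, and a block-diagonal Hermitian operator is positive semidefinite exactly when each diagonal block is. Reading off the lower-right block gives the result.

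The one genuine subtlety, and the main point to handle with care, is the invertibility of $A$: the hypothesis only assumes $A \succeq 0$, so $A^{-1}$ need not exist. I would resolve this either by reading the statement as tacitly requiring $A \succ 0$ (which is what makes $A^{-1}$ meaningful), or by a limiting argument applied to $A_\varepsilon = A + \varepsilon I \succ 0$: the lemma yields $B - X^\dagger A_\varepsilon^{-1} X \succeq 0$ for every $\varepsilon > 0$, and provided the range of $X$ is contained in the range of $A$ (which holds automatically whenever $M \succeq 0$), the operator $A_\varepsilon^{-1} X$ converges as $\varepsilon \to 0^+$ to $A^+ X$, with $A^+$ the Moore--Penrose pseudoinverse, giving $B - X^\dagger A^+ X \succeq 0$ in the limit. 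Since the lemma is invoked here only to bound cross terms such as $\Tr[\rho_c \Lambda^\dagger]$, this pseudoinverse form is in fact sufficient for the downstream application.
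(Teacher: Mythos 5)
Your proof is correct, but there is no paper proof to compare it against: the paper states this Schur-complement lemma without proof, treating it as a standard fact, and only supplies a proof for the corollary that follows from it (Corollary~\ref{cor: contraction lemma}). Both of your arguments are valid and standard, and either one alone suffices --- testing the block matrix against the minimising vector $u = -A^{-1}Xv$, or the explicit congruence $LML^\dagger$ with $L = \bigl(\begin{smallmatrix} I & 0 \\ -X^\dagger A^{-1} & I \end{smallmatrix}\bigr)$, which reduces the claim to positivity of a block-diagonal matrix. Your treatment of the invertibility issue is in fact more careful than the paper's own statement, which assumes only $A \succeq 0$ yet writes $A^{-1}$: the observation that $M \succeq 0$ forces $\mathrm{ran}(X) \subseteq \mathrm{ran}(A)$, so that regularising via $A_\varepsilon = A + \varepsilon I$ and letting $\varepsilon \to 0^+$ yields the pseudoinverse statement $B - X^\dagger A^{+} X \succeq 0$, is exactly what is needed to make the lemma rigorous as stated. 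The same caveat silently propagates into the paper's proof of Corollary~\ref{cor: contraction lemma}, which multiplies by $A^{-1/2}$ and $B^{-1/2}$; your pseudoinverse form (or, alternatively, reading the hypothesis as $A \succ 0$) is what licenses those manipulations, and as you correctly note it is sufficient for the cross-term bounds $\Tr[\rho_c \Lambda^\dagger]$ where the lemma is ultimately applied.
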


\begin{corollary} \label{cor: contraction lemma}
    Let $A \succeq 0$, $B \succeq 0$ and
\begin{equation*}
    \begin{pmatrix}
        A & X \\
        X^\dagger & B
    \end{pmatrix} \succeq 0
\end{equation*}
be positive semidefinite. Then, there exists some matrix $K$ such that
\begin{equation*}
    X = A^{1/2} K B^{1/2}
\end{equation*}
and
\begin{equation*}
    K^\dagger K \preceq \1.
\end{equation*}
\end{corollary}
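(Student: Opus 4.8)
The plan is to reduce the claim to the Schur complement lemma (Lemma~\ref{lemma: Schur complement}), first in the case where $A$ and $B$ are invertible and then in general by a perturbation argument.

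In the invertible case the construction is explicit: I would set $K = A^{-1/2} X B^{-1/2}$, so that $A^{1/2} K B^{1/2} = X$ holds by definition. To verify that $K$ is a contraction, I would apply Lemma~\ref{lemma: Schur complement} to the block matrix, which gives $B - X^\dagger A^{-1} X \succeq 0$, equivalently $X^\dagger A^{-1} X \preceq B$. Conjugating this inequality by the invertible operator $B^{-1/2}$ yields $K^\dagger K = B^{-1/2} X^\dagger A^{-1} X B^{-1/2} \preceq B^{-1/2} B B^{-1/2} = \1$, which is exactly the desired bound.

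To remove the invertibility assumption I would perturb. For $\epsilon > 0$, define $A_\epsilon = A + \epsilon \1$ and $B_\epsilon = B + \epsilon \1$, which are strictly positive and hence invertible. The block matrix $\begin{pmatrix} A_\epsilon & X \\ X^\dagger & B_\epsilon \end{pmatrix}$ equals the original positive semidefinite matrix plus $\epsilon \1$, so it remains positive semidefinite, and the invertible case applies: it produces $K_\epsilon = A_\epsilon^{-1/2} X B_\epsilon^{-1/2}$ with $A_\epsilon^{1/2} K_\epsilon B_\epsilon^{1/2} = X$ and $K_\epsilon^\dagger K_\epsilon \preceq \1$. Since the set of contractions $\{K : K^\dagger K \preceq \1\}$ is compact, I would extract a convergent subsequence $K_{\epsilon_n} \to K$ with $K^\dagger K \preceq \1$; continuity of the square root gives $A_{\epsilon_n}^{1/2} \to A^{1/2}$ and $B_{\epsilon_n}^{1/2} \to B^{1/2}$, so passing to the limit in $A_{\epsilon_n}^{1/2} K_{\epsilon_n} B_{\epsilon_n}^{1/2} = X$ recovers $A^{1/2} K B^{1/2} = X$.

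The main obstacle is precisely the degenerate case, where $A$ or $B$ is singular: then $A^{-1/2}$ and $B^{-1/2}$ do not exist, the naive definition of $K$ fails, and one must argue both that a limiting contraction exists and that the identity $X = A^{1/2} K B^{1/2}$ survives the limit. The perturbation-and-compactness scheme above resolves this cleanly. An alternative that sidesteps the limiting step entirely --- and is more robust if one wishes to avoid any finite-dimensionality assumption --- is to factor the block matrix as $M^\dagger M$ with $M = [M_1 \; M_2]$ in block-column form, so that $A = M_1^\dagger M_1$, $B = M_2^\dagger M_2$ and $X = M_1^\dagger M_2$; writing the polar decompositions $M_1 = W_1 A^{1/2}$ and $M_2 = W_2 B^{1/2}$ with partial isometries $W_1, W_2$ then gives $X = A^{1/2} (W_1^\dagger W_2) B^{1/2}$, and $K = W_1^\dagger W_2$ satisfies $K^\dagger K = W_2^\dagger (W_1 W_1^\dagger) W_2 \preceq W_2^\dagger W_2 \preceq \1$ because $W_1 W_1^\dagger$ and $W_2^\dagger W_2$ are projections.
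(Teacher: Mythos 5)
Your proposal is correct, and its core is exactly the paper's proof: the paper defines $K = A^{-1/2} X B^{-1/2}$, invokes the Schur complement lemma to get $B \succeq X^\dagger A^{-1} X$, and conjugates by $B^{-1/2}$ to conclude $K^\dagger K \preceq \1$ --- nothing more. In particular, the paper silently assumes $A$ and $B$ are invertible (as does its Schur complement lemma, which uses $A^{-1}$), so your perturbation step and your factorization alternative address a case the paper simply does not treat. That extra care is not idle: in the paper's application (Corollary~\ref{cor: correction to statistics constraints}), the blocks $\rho_0, \rho_h, \Pi_0, \Pi_h$ are (sub-normalised) states and POVM elements, which are routinely singular, and $\rho_h, \Pi_h$ live on the infinite-dimensional high-energy subspace. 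One caveat on your first fix: norm-compactness of the set of contractions holds only in finite dimensions, so for the infinite-dimensional setting the subsequence extraction should be replaced by weak-operator compactness of the unit ball (with the limit $A_{\epsilon_n}^{1/2} K_{\epsilon_n} B_{\epsilon_n}^{1/2} \to A^{1/2} K B^{1/2}$ taken in that topology, using norm convergence of the square roots and uniform boundedness of the $K_{\epsilon_n}$). Your polar-decomposition argument, $X = M_1^\dagger M_2 = A^{1/2}(W_1^\dagger W_2)B^{1/2}$ with $K = W_1^\dagger W_2$, avoids this issue entirely and is the cleanest proof of the statement in the generality the paper actually needs; you correctly flagged it as the more robust route.
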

\begin{proof}
Based on Lemma~\ref{lemma: Schur complement}, we have
\begin{equation*}
    B \succeq X^\dagger A^{-1} X.    
\end{equation*}
Hence, by multiplying both sides with $B^{-1/2}$ to the left and to the right, we have
\begin{equation*}
    \1 \succeq B^{-1/2} X^\dagger A^{-1} X B^{-1/2} = \left( A^{-1/2} X B^{-1/2}\right)^\dagger \left( A^{-1/2} X B^{-1/2}\right)
\end{equation*}
We define
\begin{equation*}
    K := A^{-1/2} X B^{-1/2}.
\end{equation*}
It is straightforward to see that this implies the two claims.
\end{proof}

\begin{lemma}[Cauchy-Schwarz inequality for trace] \label{lemma: CS inequality}
Let $A$ and $B$ be matrices. Then,
\begin{equation*}
    \Tr[A^\dagger B]^2 \leq \Tr[A^\dagger A] \cdot \Tr[B^\dagger B]
\end{equation*}
\end{lemma}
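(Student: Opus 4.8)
The plan is to recognise that $\langle A, B \rangle := \Tr[A^\dagger B]$ is the Hilbert--Schmidt (Frobenius) inner product on the space of matrices, so that the claim is nothing more than the Cauchy--Schwarz inequality for this inner product. The structural facts I would use are that the form is linear in $B$, conjugate-linear in $A$, and positive-definite, since $\Tr[A^\dagger A] = \sum_{ij} |A_{ij}|^2 \geq 0$ with equality if and only if $A = 0$. Given these, the result is an instance of the abstract Cauchy--Schwarz theorem; to keep the appendix self-contained, however, I would record the standard quadratic-form argument rather than cite it.

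Concretely, the first step is to observe that for any scalar $\lambda \in \mathbb{C}$ the operator $(A - \lambda B)^\dagger (A - \lambda B)$ is positive semidefinite and hence has non-negative trace, which expands to
\begin{equation*}
    \Tr[A^\dagger A] - \lambda \Tr[A^\dagger B] - \bar{\lambda}\,\Tr[B^\dagger A] + |\lambda|^2 \Tr[B^\dagger B] \geq 0 .
\end{equation*}
The second step is to dispose of the degenerate case: if $\Tr[B^\dagger B] = 0$ then $B = 0$, so both sides of the claim vanish and there is nothing to prove. Otherwise, using $\Tr[B^\dagger A] = \overline{\Tr[A^\dagger B]}$, I would insert the minimising choice $\lambda = \overline{\Tr[A^\dagger B]}\,/\,\Tr[B^\dagger B]$, which collapses the left-hand side to $\Tr[A^\dagger A] - |\Tr[A^\dagger B]|^2/\Tr[B^\dagger B]$. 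Rearranging then yields $|\Tr[A^\dagger B]|^2 \leq \Tr[A^\dagger A]\,\Tr[B^\dagger B]$; when $\Tr[A^\dagger B]$ is real---as it effectively is in the intended application to the real combination $2\,\mathrm{Re}\,\Tr[\rho_c \Lambda^\dagger]$---the left-hand side equals $\Tr[A^\dagger B]^2$ and the lemma is recovered exactly as stated.

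There is essentially no hard step here: the result is elementary, and the only points deserving a word of care are the separate treatment of $B = 0$ (to avoid dividing by zero in the choice of $\lambda$) and the precise reading of the square on the left-hand side in the complex case, where the sharp statement is phrased with $|\Tr[A^\dagger B]|^2$. An equally short alternative I would keep in reserve is to treat the real quadratic $t \mapsto \Tr[(A - tB)^\dagger(A - tB)] \geq 0$ in a real variable $t$ and impose that its discriminant be non-positive; this directly produces $(\mathrm{Re}\,\Tr[A^\dagger B])^2 \leq \Tr[A^\dagger A]\,\Tr[B^\dagger B]$, which is already the form actually needed downstream for bounding $\Tr[\rho_c \Lambda^\dagger]$.
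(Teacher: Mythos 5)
Your proof is correct. Note that the paper itself states this lemma without any proof, treating it as a standard fact, so there is no paper argument to compare against; your Hilbert--Schmidt quadratic-form derivation is the canonical way to fill that gap, and the separate handling of the degenerate case $\Tr[B^\dagger B]=0$ is the right piece of care. Your side remark is also apt: as literally written the lemma is only meaningful when $\Tr[A^\dagger B]$ is real (otherwise the left-hand side is a complex number), and the sharp general statement is $\abs{\Tr[A^\dagger B]}^2 \leq \Tr[A^\dagger A]\cdot\Tr[B^\dagger B]$. This matches how the paper actually uses the lemma downstream, where the cross term enters the constraint expansion as $\Tr[\rho_c\Lambda^\dagger]+\Tr[\rho_c^\dagger\Lambda] = 2\,\mathrm{Re}\,\Tr[\rho_c\Lambda^\dagger]$, so the modulus (or real-part) version is what is genuinely needed for the two-sided bound $-\sqrt{\Tr[\rho_0\Pi_0]\Tr[\rho_h\Pi_h]} \leq \mathrm{Re}\,\Tr[\rho_c\Lambda^\dagger] \leq \sqrt{\Tr[\rho_0\Pi_0]\Tr[\rho_h\Pi_h]}$.
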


\begin{lemma}[Bounds on the cross terms]
Let $\rho$ and $\Pi$ be given by Eqs.~\eqref{eq: rho blocks} and \eqref{eq: Pi blocks}, respectively. Then, we have
\begin{equation}
    -\sqrt{\Tr[\rho_0 \Pi_0] \Tr[\rho_h \Pi_h]} \leq \Tr[\rho_c \Lambda^\dagger] \leq \sqrt{\Tr[\rho_0 \Pi_0] \Tr[\rho_h \Pi_h]}
\end{equation}
\end{lemma}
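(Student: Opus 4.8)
The plan is to factor the off-diagonal blocks $\rho_c$ and $\Lambda$ through the contraction lemma, reduce the cross term to a single trace of a product $\Tr[MN]$, and then apply the trace Cauchy--Schwarz inequality so that the two resulting factors collapse to exactly $\Tr[\rho_0\Pi_0]$ and $\Tr[\rho_h\Pi_h]$. First I would record the two positivity facts needed: $\rho\succeq0$ by assumption, and $\Pi\succeq0$ because $\Pi$ is a POVM element (indeed $0\preceq\Pi\preceq\1$). Both are blocked with respect to the splitting $\cH_{\mathrm{low}}\oplus\cH_{\mathrm{high}}$ as in Eqs.~\eqref{eq: rho blocks} and~\eqref{eq: Pi blocks}, so Corollary~\ref{cor: contraction lemma} applies to each. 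It produces contractions $K_\rho$ and $K_\Pi$, with $K_\rho^\dagger K_\rho\preceq\1$ and $K_\Pi^\dagger K_\Pi\preceq\1$, such that $\rho_c=\rho_0^{1/2}K_\rho\rho_h^{1/2}$ and $\Lambda=\Pi_0^{1/2}K_\Pi\Pi_h^{1/2}$, whence $\Lambda^\dagger=\Pi_h^{1/2}K_\Pi^\dagger\Pi_0^{1/2}$.

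Next I would substitute these factorisations into $\Tr[\rho_c\Lambda^\dagger]$ and use cyclicity of the trace to write it as $\Tr[MN]$ with $M=\Pi_0^{1/2}\rho_0^{1/2}K_\rho$ and $N=\rho_h^{1/2}\Pi_h^{1/2}K_\Pi^\dagger$. Applying Lemma~\ref{lemma: CS inequality} with $A=M^\dagger$ and $B=N$ gives $\abs{\Tr[\rho_c\Lambda^\dagger]}^2\leq\Tr[MM^\dagger]\,\Tr[N^\dagger N]$. It then remains to bound each factor. Cyclicity turns $\Tr[MM^\dagger]$ into $\Tr[\rho_0^{1/2}K_\rho K_\rho^\dagger\rho_0^{1/2}\Pi_0]$; since $K_\rho^\dagger K_\rho\preceq\1$ forces $K_\rho K_\rho^\dagger\preceq\1$ (same nonzero spectrum), I get $\rho_0^{1/2}K_\rho K_\rho^\dagger\rho_0^{1/2}\preceq\rho_0$, and pairing against $\Pi_0\succeq0$ (the trace of a product of two positive semidefinite operators being non-negative) yields $\Tr[MM^\dagger]\leq\Tr[\rho_0\Pi_0]$. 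Symmetrically, $\Tr[N^\dagger N]=\Tr[K_\Pi^\dagger K_\Pi\,\Pi_h^{1/2}\rho_h\Pi_h^{1/2}]\leq\Tr[\rho_h\Pi_h]$, now using $K_\Pi^\dagger K_\Pi\preceq\1$ against $\Pi_h^{1/2}\rho_h\Pi_h^{1/2}\succeq0$. Combining and taking square roots gives $\abs{\Tr[\rho_c\Lambda^\dagger]}\leq\sqrt{\Tr[\rho_0\Pi_0]\Tr[\rho_h\Pi_h]}$, which is exactly the claimed two-sided bound.

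I expect the main obstacle to be purely organisational: arranging the four factors $\rho_0^{1/2},\rho_h^{1/2},\Pi_0^{1/2},\Pi_h^{1/2}$ so that Cauchy--Schwarz splits them into precisely $\Tr[\rho_0\Pi_0]$ and $\Tr[\rho_h\Pi_h]$ rather than some mixed cross quantity, and invoking the contraction bound in the correct form ($K_\rho K_\rho^\dagger\preceq\1$ on one side, $K_\Pi^\dagger K_\Pi\preceq\1$ on the other). A secondary technical point is that Corollary~\ref{cor: contraction lemma} as proved uses inverses of $\rho_0,\rho_h,\Pi_0,\Pi_h$, which may be singular; I would dispose of this by first restricting to their supports, or equivalently by replacing each operator $A$ with $A+\epsilon\1$ and letting $\epsilon\to0$, since every quantity appearing in the statement is continuous in $\rho$ and $\Pi$.
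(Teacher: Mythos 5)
Your proof is correct and takes essentially the same approach as the paper's: factor $\rho_c$ and $\Lambda$ through Corollary~\ref{cor: contraction lemma}, apply the trace Cauchy--Schwarz inequality of Lemma~\ref{lemma: CS inequality}, and absorb the contractions via $K^\dagger K \preceq \1$, with the only difference being bookkeeping (you place one contraction in each Cauchy--Schwarz factor, while the paper groups both into the second factor). Your closing remark on handling singular blocks by support restriction or $\epsilon$-regularisation is a sensible patch for a point the paper leaves implicit.
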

\begin{proof}
Based on Corollary~\ref{cor: contraction lemma}, we can write
\begin{align*}
    \rho_c &= \rho_0^{1/2} K_1 \rho_h^{1/2}\\
    \Lambda &= \Pi_0^{1/2} K_2 \Pi_h^{1/2},
\end{align*}
for some $K_1, K_2$ such that $K_1^\dagger K_1 \preceq \1$ and $K_2^\dagger K_2 \preceq \1$.

Then, we can write
\begin{align*}
    \Tr[\rho_c \Lambda^\dagger] &= \Tr\left[\left(\rho_0^{1/2} K_1 \rho_h^{1/2} \right)\cdot \left(\Pi_h^{1/2} K_2^\dagger \Pi_0^{1/2}\right) \right] \\
    &= \Tr\left[ \left(\Pi_0^{1/2} \rho_0^{1/2}\right) \cdot \left(K_1 \rho_h^{1/2} \Pi_h^{1/2} K_2^\dagger \right) \right]\\
    &\leq \sqrt{\Tr\left[ \Pi_0 \rho_0 \right] \cdot \Tr\left[\left(K_1 \rho_h^{1/2} \Pi_h^{1/2} K_2^\dagger \right) \cdot \left(K_2 \Pi_h^{1/2} \rho_h^{1/2} K_1^\dagger \right)\right]}\\
    &\leq \sqrt{\Tr\left[ \Pi_0 \rho_0 \right] \cdot \Tr\left[\rho_h\Pi_h\right]}
\end{align*}
where in the first line, we substitute $\rho_c$ and $\Lambda^\dagger$ with their identities which involved $K_1$ and $K_2$. In the second line, we use the cyclic permutation symmetry of trace and group the terms accordingly. In the third line, we apply Lemma~\ref{lemma: CS inequality}. In the fourth line, we use the fact that $K_1^\dagger K_1 \preceq \1$, $K_2^\dagger K_2 \preceq \1$ and the monotonicity of trace. This proves the upper bound claim. The lower bound claim can be obtained similarly by modifying the third line where we consider the negative square root of Lemma~\ref{lemma: CS inequality} instead.
\end{proof}

\begin{corollary}\label{cor: correction to statistics constraints}
    Let $\rho$ and $\Pi$ be given by Eqs.~\eqref{eq: rho blocks} and \eqref{eq: Pi blocks}, respectively. Let $p = \Tr[\rho \Pi]$ and $w \geq \Tr[\rho_h]$. Then, we have
    \begin{equation}
        p - \xi_L \leq \Tr[\rho_0 \Pi_0] \leq p - \xi_U
    \end{equation}
    where
    \begin{equation}
    \begin{split}
        \xi_L &= \begin{cases}
            w + 2\sqrt{w(1-w)} & \mathrm{if} \quad w < \frac{5+ \sqrt{5}}{10}\\
            \frac{1+\sqrt{5}}{2} & \mathrm{if} \quad w \geq \frac{5+ \sqrt{5}}{10}     
       \end{cases}\\
        \xi_U &= \begin{cases}
            w - 2 \sqrt{w(1-w)}  &\mathrm{if} \quad w < \frac{5 - \sqrt{5}}{10}\\
            \frac{1 - \sqrt{5}}{2} \qquad &\mathrm{if} \quad w \geq \frac{5 - \sqrt{5}}{10}
        \end{cases}
    \end{split}
    \end{equation}
\end{corollary}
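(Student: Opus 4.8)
The plan is to start from the block expansion of the observed statistic and turn the cross-term bound just established into a quadratic constraint relating the truncated quantity $\Tr[\rho_0\Pi_0]$ to the observed value $p$. Writing $\rho$ and $\Pi$ in the block form of Eqs.~\eqref{eq: rho blocks} and~\eqref{eq: Pi blocks}, I would recall the identity already obtained above,
\begin{equation*}
    p = \Tr[\rho\Pi] = \Tr[\rho_0\Pi_0] + \Tr[\rho_h\Pi_h] + 2\Tr[\rho_c\Lambda^\dagger],
\end{equation*}
and apply the preceding lemma on the cross terms, which gives $\abs{\Tr[\rho_c\Lambda^\dagger]} \leq \sqrt{\Tr[\rho_0\Pi_0]\,\Tr[\rho_h\Pi_h]}$. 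Introducing the shorthand $a = \Tr[\rho_0\Pi_0]$ and $b = \Tr[\rho_h\Pi_h]$, the identity reads $p - a = b + 2\Tr[\rho_c\Lambda^\dagger]$, so that $\abs{p-a-b} \leq 2\sqrt{ab}$.

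Next I would exploit the constraints on $a$ and $b$. Since $0 \preceq \Pi_0, \Pi_h \preceq \1$ and $\rho$ is normalised, $b \leq \Tr[\rho_h] \leq w$ and $a \leq \Tr[\rho_0] = 1 - \Tr[\rho_h] \leq 1 - b$. The last inequality lets me eliminate $a$ from the square root, $\sqrt{ab} \leq \sqrt{b(1-b)}$, and substituting this together with the sign of the cross term yields the two-sided bound
\begin{equation*}
    b - 2\sqrt{b(1-b)} \leq p - \Tr[\rho_0\Pi_0] \leq b + 2\sqrt{b(1-b)}.
\end{equation*}
Because the true value of $b$ is unknown but constrained to $[0,w]$, the final step is to extremise the two edges over $b\in[0,w]$: the largest value of $b + 2\sqrt{b(1-b)}$ gives $\xi_L$ and hence the lower bound $\Tr[\rho_0\Pi_0] \geq p - \xi_L$, while the smallest value of $b - 2\sqrt{b(1-b)}$ gives $\xi_U$ and the upper bound $\Tr[\rho_0\Pi_0] \leq p - \xi_U$.

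The remaining work is the single-variable optimisation, which is where the stated closed forms come from. Differentiating $b \pm 2\sqrt{b(1-b)}$ and setting the derivative to zero reduces, after squaring, to $5b^2 - 5b + 1 = 0$, whose roots are $b = (5\pm\sqrt5)/10$. Evaluating the upper edge at its interior maximiser $b=(5+\sqrt5)/10$ gives $(1+\sqrt5)/2$, and the lower edge at its interior minimiser $b=(5-\sqrt5)/10$ gives $(1-\sqrt5)/2$; one checks that on $[0,(5+\sqrt5)/10]$ the function $b+2\sqrt{b(1-b)}$ is increasing and on $[0,(5-\sqrt5)/10]$ the function $b-2\sqrt{b(1-b)}$ is decreasing. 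Hence, whenever $w$ lies below the relevant critical point the extremum is attained at the endpoint $b=w$, producing the $w\pm2\sqrt{w(1-w)}$ branches, and otherwise the extremum is the interior critical value $(1\pm\sqrt5)/2$, exactly reproducing the case distinctions in the statement.

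I expect the main obstacle to be bookkeeping rather than conceptual. One must take care that the chain $\abs{\Tr[\rho_c\Lambda^\dagger]}\leq\sqrt{ab}\leq\sqrt{b(1-b)}$ keeps the correct orientation for \emph{both} edges, and that the endpoint-versus-interior dichotomy is resolved by a genuine monotonicity argument on each branch rather than by merely locating a stationary point. The only slightly delicate analytic point is verifying that $b=(5-\sqrt5)/10$ is a minimiser (not a maximiser) of $b-2\sqrt{b(1-b)}$, and symmetrically for the other edge; this follows either from the second-derivative sign or directly from the monotonicity checks above.
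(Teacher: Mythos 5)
Your proof is correct and rests on the same pillars as the paper's: the block identity $p = \Tr[\rho_0\Pi_0] + \Tr[\rho_h\Pi_h] + 2\Tr[\rho_c\Lambda^\dagger]$, the cross-term lemma, and an extremisation whose stationarity condition reduces to $5x^2-5x+1=0$ with critical points $(5\pm\sqrt{5})/10$. The one organisational difference is in the optimisation: you work with the single variable $b=\Tr[\rho_h\Pi_h]$, relax the low-energy overlap to $\Tr[\rho_0\Pi_0]\le 1-b$, and extremise $b\pm 2\sqrt{b(1-b)}$ over $b\in[0,w]$, so that one monotonicity analysis serves both edges; the paper instead parameterises by $\omega=\Tr[\rho_h]\le w$ and, for the $\xi_U$ branch, introduces a second variable $\lambda$ via $\Tr[\rho_h\Pi_h]=\omega\lambda$, minimising first over $\lambda\in[0,1]$ and then over $\omega\in[0,w]$. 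The two formulations are equivalent: in the paper's joint minimisation of $F(\omega,\lambda)=\omega\lambda-2\sqrt{\omega\lambda(1-\omega)}$ at fixed $b=\omega\lambda$, the minimum over admissible $\omega\ge b$ is attained at $\omega=b$ (i.e.\ $\lambda=1$), which is exactly your objective, so the closed forms coincide. Your version is shorter and symmetric; the paper's buys nothing extra here beyond exhibiting the intermediate function $G(\omega)$, which it then minimises to the same values.
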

\begin{proof}
First, we write
\begin{equation*}
    \Tr[\rho_0 \Pi_0] = \Tr[\rho \Pi] - \Tr[\rho_h \Pi_h] - 2 \Tr[\rho_c \Lambda^\dagger]
\end{equation*}
Thus, the correction term $\xi$ is given by
\begin{equation*}
    \xi = \Tr[\rho_h \Pi_h] + 2 \Tr[\rho_c \Lambda^\dagger]
\end{equation*}
We assume that $\Tr[\rho_h] = \omega \leq w$. We remark that $\omega$ is not observable in practice and we can only bound it with $w$ which is a function of $\nu$.

To prove the lower bound, this is easy because we simply have to maximise the correction term
\begin{equation*}
     \xi \leq \Tr[\rho_h \Pi_h] + 2 \sqrt{\Tr[\rho_h \Pi_h]\cdot \Tr[\rho_0 \Pi_0]} \leq \omega + 2 \sqrt{\omega(1-\omega)}
\end{equation*}
Differentiating the upper bound with respect to $\omega$, we see that there is a critical $\omega$ at $\omega_c = (5 + \sqrt{5})/10 \approx 0.724$. Therefore, we have
\begin{equation}
    \xi \leq \xi_L := \begin{cases}
        w + 2\sqrt{w(1-w)} &\text{if $w < \omega_c$}\\
        \frac{1 + \sqrt{5}}{2} & \text{if $w \geq \omega_c$}
    \end{cases}
\end{equation}

The proof for the upper bound is slightly more complicated.  We now have to minimise the correction term $\xi$.
\begin{equation*}
    \xi \geq \Tr[\rho_h \Pi_h] - 2 \sqrt{\Tr[\rho_h \Pi_h] \cdot \Tr[\rho_0 \Pi_0]} = \sqrt{\Tr[\rho_h \Pi_h]} \cdot \left(\sqrt{\Tr[\rho_h \Pi_h]} - 2 \sqrt{\Tr[\rho_0 \Pi_0]} \right)
\end{equation*}
The minimisation over $\Pi_0$ is easy: we want $\rho_0$ and $\Pi_0$ to be perfectly overlapped and hence $\Tr[\rho_0 \Pi_0] \geq 1 - \omega$. We then parameterise $\Tr[\rho_h \Pi_h] = \omega \lambda$ for some $\lambda \in [0,1]$. Thus, we have
\begin{equation*}
    \xi \geq \sqrt{\omega \lambda} \cdot \left(\sqrt{\omega \lambda} - 2 \sqrt{1 - \omega} \right) := F(\omega, \lambda)
\end{equation*}
Differentiating with respect to $\lambda$, we have
\begin{equation*}
    \frac{\partial F}{\partial \lambda} = \omega \left(1- \sqrt{\frac{1-\omega}{\omega \lambda}}\right) 
\end{equation*}
For $\omega < 1/2$, we have $\frac{\partial F}{\partial \lambda} < 0$. On the other hand, for $\omega \geq 1/2$, we have $\frac{\partial F}{\partial \lambda} = 0$ when $\lambda = (1-\omega)/\omega$. We consider the case where $\omega < 1/2$ first. In this case, it is optimal to take $\lambda = 1$ since the first derivative with respect to $\lambda$ is negative. In the second case, we shall take $\lambda = (1-\omega)/\omega$. Therefore, our correction term is given by
\begin{equation*}
    \xi \geq G(\omega) :=\begin{cases}
    \omega - 2 \sqrt{\omega(1-\omega)} & \text{if $\omega < 1/2$}\\
    \omega - 1 & \text{if $\omega \geq 1/2$}
    \end{cases}
\end{equation*}
Note that $G(\omega) \leq 0$ for all $\omega \in [0,1]$, hence $p - G(\omega) \geq p$. Taking the first derivative of $G$ with respect to $\omega$:
\begin{equation*}
    \frac{\mathrm{d} G}{\mathrm{d} \omega} =
    \begin{cases}
        \frac{2\omega + \sqrt{\omega(1-\omega)}-1}{\sqrt{\omega(1-\omega)}} & \text{for $\omega < 1/2$}\\
        1 & \text{for $\omega \geq 1/2$}
    \end{cases}.
\end{equation*}
Thus, it is clear to see that the optimal $\omega$ must be within the [0, 1/2) interval. We solve for $\frac{\mathrm{d}G}{\mathrm{d}\omega} = 0$ in this interval and obtain the optimal $\omega_* = (5-\sqrt{5})/10 \approx 0.276$. Also, we have $\frac{\mathrm{d}G}{\mathrm{d}\omega} < 0$ for $\omega < \omega_*$ and $\frac{\mathrm{d}G}{\mathrm{d}\omega} > 0$ for $\omega > \omega_*$, which implies the minimum $\xi$ is given by
\begin{equation*}
    \xi \geq \xi_+ := \begin{cases}
        w - 2\sqrt{w(1-w)} &\text{if $w < \omega_*$}\\
        \frac{1-\sqrt{5}}{2} &\text{if $w \geq \omega_*$}
    \end{cases}.
\end{equation*}
This concludes the proof of the upper bound, and hence the claim.
\end{proof}

Corollary~\ref{cor: correction to statistics constraints} gives an upper and lower bound on the statistical constraints in terms of the actual probability $p$ associated to the original state and the upper bound on the weight of the high photon number components $w$. However, these bounds are not linear in $w$ (hence, not linear in $\nu$) whereas for the purpose of constructing min-tradeoff functions, it is convenient to obtain correction terms that are linear in $\nu$. This ensures that the dual solution obtained from our numerical methods can be readily used as min-tradeoff functions \footnote{Alternatively, we can keep the expressions for $\xi_L$ and $\xi_U$ as they are and linearise the min-tradeoff functions later.}

Fortunately, the correction terms $\xi_L$ (and $\xi_U$) can be easily linearised since they are concave (and convex, respectively). By taking a tangent line at any point on the curve, we will obtain an upper bound on $\xi_L$ (and lower bound on $\xi_U$), which relaxes the constraint in the right direction. We denote these bounds by $\hat{\xi}_L^{(\nu_L)}$ and $\hat{\xi}_U^{(\nu_U)}$, respectively. Here $\nu_L$ and $\nu_U$ denotes the point in which the tangent line is taken.

Let $\nu_L \in (0,(5+\sqrt{5})/10\kappa]$. Then, $\hat{\xi}_L^{(\nu_L)}(\nu)$ is given by 
\begin{equation} \label{eqn: linear xi_L}
    \hat{\xi}_L^{(\nu_L)}(\nu) = \left(1 + \frac{(1- 2 \kappa \nu_L)}{\sqrt{\kappa \nu_L (1 - \kappa \nu_L)}} \right) \kappa\nu + \sqrt{\frac{\kappa \nu_L}{1-\kappa \nu_L}}
\end{equation}
On the other hand, let $\nu_U \in (0, (5-\sqrt{5})/10\kappa]$. Then, $\hat{\xi}_U^{(\nu_U)}(\nu)$ is given by
\begin{equation} \label{eqn: linear xi_U}
    \hat{\xi}_U^{(\nu_U)}(\nu) = \left(1 - \frac{(1- 2 \kappa \nu_U)}{\sqrt{\kappa \nu_U (1 - \kappa \nu_U)}} \right) \kappa\nu - \sqrt{\frac{\kappa \nu_U}{1-\kappa \nu_U}}
\end{equation}

\subsection{Proof of dimension reduction theorem}
Putting everything together, we can define the function $\tilde{g}'(q)$
\begin{equation} \label{eq: annex truncated optimisation}
\begin{split}
    \tilde{g}'(q) = \inf_{\sigma_{AB}} \quad & H(Z|L,E)_{\cN[\sigma_{AB}]}\\
    \text{s.t.} \quad & \sigma_{AB} \succeq 0,\\
    & \Tr[\sigma_{AB}] \leq 1,\\
    & \Tr[\sigma_{AB}] \geq 1 - \kappa q_{\top},\\
    & \Delta(\sigma_A, \Tr_Q[\ketbra{\Psi}{\Psi}_{AQ}]) \leq \kappa q_{\top}/2,\\
    & \Tr[\sigma_{AB} \Pi_c] \leq q_c - \xi_U(q_{\top}), \\
    & \Tr[\sigma_{AB} \Pi_c] \geq q_c + \xi_L(q_{\top})
\end{split}
\end{equation}
Here, we use the notation $\sigma_{AB} = (\1_A \otimes N_0) \rho_{AB} (\1_A \otimes N_0)$. We can define the function $\tilde{g}(q)$ as the linearised version of the above function which can be obtained by replacing $\xi_L$ and $\xi_U$ with $\hat{\xi}_L$ and $\hat{\xi}_U$, respectively. (Refer to Eqs.~\eqref{eqn: linear xi_L} and \eqref{eqn: linear xi_U} for those expressions). Then, using Lemma~\ref{lemma: continuity entropy}, we have
\begin{align}
    g(q) &= \inf_{\rho_{AB} \in \cS_{q}} H(Z|L,E)_{\cN[\rho_{AB}]} \nonumber\\
    &\geq \inf_{\sigma_{AB}: \, \Delta(\rho_{AB}, \sigma_{AB}) \leq \delta(q_{\top}), \, \rho_{AB} \in \cS_{q}} H(Z|L,E)_{\cN[\sigma_{AB}]} \underbrace{- \delta(q_{\top}) \log_2 d_Z - (1+\delta(q_{\top})) h_2 \left(\frac{\delta(q_{\top})}{1 + \delta(q_{\top})}\right)}_{-v(\delta)} \nonumber\\
    &\geq \tilde{g}(q) - g_{\corr}^{(\nu_c)}(q_{\top}).
\end{align}
Here, we denote $\cS_q$ as a shorthand for the feasible set defined in Eq.~\eqref{eq: annex original optimisation}. The second line is due to Lemma~\ref{lemma: continuity entropy}. The last line can be deduced by identifying that the constraints $\Delta(\rho_{AB}, \sigma_{AB}) \leq \delta(q_{\top})$ and $\rho_{AB} \in \cS_q$ are equivalent to the one written in Eq.~\eqref{eq: annex truncated optimisation}. Additionally, we notice that the correction terms $v(\delta(q_\top))$ is upper bounded by $g_{\corr}^{(\nu_c)}(q_\top)$, as derived previously. This concludes the proof of Theorem~\ref{thm: dimension reduction}.

\section{Numerical analysis} \label{annex: numerical analysis}

\subsection{Channel Model} \label{annex: model}

For numerical simulation, we consider a channel with channel loss $\eta$ along with excess noise $\chi$. After the prepared state $\ket{\psi_{X_j}} = \ket{\alpha e^{i (\pi X_j/2 + \pi/4)}}$ passes through the channel, the probability that it would result in a heterodyne detection in some phase space region $\mathcal{R}$ with $\abs{Y_j}^2\in[\tau_1^2,\tau_2^2]$ and $\theta\in[z\pi/2,(z+1)\pi/2]$ is
\begin{equation} \label{eq: XZ_distribution}
    \Pr[Y_j\in\mathcal{R}]=\frac{1}{\pi}\int_{\tau_1}^{\tau_2}d\beta\int_{\frac{z\pi}{2}}^{\frac{(z+1)\pi}{2}}d\theta \frac{\beta}{1+\frac{\eta\zeta}{2}} e^{\frac{-\beta^2-\eta\abs{\alpha}^2 + 2\abs{\alpha}\beta\sqrt{\eta}\cos(\theta-\frac{(2X_j+1)\pi}{4})}{1+\frac{\eta\zeta}{2}}}.
\end{equation}
We can compute the integral analytically for $\abs{Y_j}$ from 0 to infinity when considering regions extending to $\infty$,
\begin{equation}
    \Pr[\theta\in[z\pi/2,(z+1)\pi/2]]=\frac{1}{2\pi} \int_{\frac{z\pi}{2}}^{\frac{(z+1)\pi}{2}}d\theta e^{-\varphi^2} \left\{1+\sqrt{\pi}\varphi f(\theta)e^{\varphi^2f(\theta)^2}\text{erf}\left[\varphi f(\theta)+ 1\right]\right\},
\end{equation}
where $\varphi=\sqrt{\frac{\eta\abs{\alpha}^2}{1+\frac{\eta\zeta}{2}}}$ and $f(\theta)=\cos(\theta-\frac{(2X_j+1)\pi}{4})$.\\

\subsection{Min-tradeoff Function} \label{annex: min-tradeoff}

As discussed in Sec.~\ref{sec: SDP to bound entropy}, one could compute a lower bound on the conditional von Neumann entropy by solving the SDP in Eqn.~\eqref{eq: final SDP}.
We first define the score $C$ (excluding $\perp$) determined by Alice's state preparation choice and the measurement outcome discretisation for $T_j=1$,
\begin{equation}
    C_j=\begin{cases}
        \top & Z_j=\top\\
        0 & \cap_{k\in[0,3]}\{X_j=k\land Z_j=k\,(\text{mod}\,4)\}\\
        1 & \cap_{k\in[0,3]}\{X_j=k\land Z_j=k+1\,(\text{mod}\,4)\}\\
        2 & \cap_{k\in[0,3]}\{X_j=k\land Z_j=k+2\,(\text{mod}\,4)\}\\
        3 & \cap_{k\in[0,3]}\{X_j=k\land Z_j=k+3\,(\text{mod}\,4)\}\\
        (\varnothing,0) & \cap_{k\in[0,3]}\{X_j=k\land Z_j=(\varnothing,k)\,(\text{mod}\,4)\}\\
        (\varnothing,1) & \cap_{k\in[0,3]}\{X_j=k\land Z_j=(\varnothing,k+1\,(\text{mod}\,4))\}\\
        (\varnothing,2) & \cap_{k\in[0,3]}\{X_j=k\land Z_j=(\varnothing,k+2\,(\text{mod}\,4))\}\\
        (\varnothing,3) & \cap_{k\in[0,3]}\{X_j=k\land Z_j=(\varnothing,k+3\,(\text{mod}\,4))\}
    \end{cases}
\end{equation}
Based on the protocol, we can model the projectors of the measurement operators, $\tilde{\Pi}_c$, when they are projected into the photon number subspace for $n\leq n_{\max}$.
The POVM $\Pi_c$ projects the state in phase space to a region $\mathcal{R}_c^a$ (conditioned on $A=a$),
\begin{equation}
    \Pi_c=\sum_a\ketbra{a}{a}\otimes\frac{1}{\pi}\int_{\mathcal{R}^a_c}dY_j \ketbra{Y_j}{Y_j}.
\end{equation}
The projected POVM is simply
\begin{equation}
\begin{split}
    \tilde{\Pi}_c=&\sum_a\ketbra{a}{a}\otimes\sum_{n,n'\leq n_{\max}}\frac{1}{\pi}\int_{\mathcal{R}^a_c}dY_j \braket{n}{Y_j}\braket{Y_j}{n'}\ket{n}\bra{n'}\\
    =&\sum_a\ketbra{a}{a}\otimes\sum_{n,n'\leq n_{\max}}\frac{1}{\pi}\int_{\mathcal{R}^a_c}dY_j e^{-\abs{Y_j}^2}\frac{Y_j^n(Y_j^*)^{n'}}{\sqrt{n!n'!}}\ket{n}\bra{n'},
\end{split}
\end{equation}
where the second line notes $\braket{n}{Y_j}=e^{-\frac{\abs{Y_j}^2}{2}}\frac{Y_j^n}{\sqrt{n!}}$.
We can compute the overlap
\begin{equation}
    \Tr_Q[\ket{\Psi}\bra{\Psi}_{AQ}]=\frac{1}{4}\sum_{X_jX_j'}e^{-\abs{\alpha}^2(1-i^{X_j-X_j'})}\ket{X_j}\bra{X_j'}.
\end{equation}
Having these terms allows us to solve the SDP numerically with mosek~\cite{MOSEK} in Matlab and obtain the dual solution.\\

Solving the SDP with any trial parameter $q_{\text{dual}}$ with unit length provides a dual 
\begin{equation}
    \tilde{g}^{(q_{\text{dual}})}(q_{\text{dual}})=\varphi(\vbf{\lambda}_{q_{\text{dual}}})+\vbf{\lambda}_{q_{\text{dual}}}\cdot \vbf{h}(q_{\text{dual}}),
\end{equation}
where $\tilde{g}^{(q_{\text{dual}})}$ explicitly indicates that the dual function is obtained, and $\vbf{h}$ is some fixed function independent of the dual variables.
We note here that $\vbf{\lambda}_{q_{\text{dual}}}$ refer to the optimal dual variable when the SDP is solved with $q_{\text{dual}}$, i.e.
\begin{equation}
    \vbf{\lambda}_{q_{\text{dual}}}=\text{argmax}_{\vbf{\lambda}}\varphi(\vbf{\lambda})+\vbf{\lambda}\cdot \vbf{h}(q_{\text{dual}}).
\end{equation}
Importantly, for any $q$, we can shown that $\tilde{g}^{(q_{\text{dual}})}(q)$ lower bounds $\tilde{g}^{(q)}(q)$,
\begin{equation}
\begin{split}
    \tilde{g}^{(q)}(q)=&\text{max}_{\vbf{\lambda}}\varphi(\vbf{\lambda})+\vbf{\lambda}\cdot \vbf{h}(q)\\
    \geq &\varphi(\vbf{\lambda}_{q_{\text{dual}}})+\vbf{\lambda}_{q_{\text{dual}}}\cdot \vbf{h}(q)\\
    =&\tilde{g}^{(q_{\text{dual}})}(q).
\end{split}
\end{equation}
As such, we can always select one value of $q_{\text{dual}}$, and utilise $\tilde{g}^{(q_{\text{dual}})}(q)$ to construct the min-tradeoff function with the same linearisation method via Eq.~\eqref{eq: linearisation method} and conversion via Eq.~\eqref{eq: min-tradoff conversion}.
Since this value always lower bounds the conditional von Neumann entropy for all $q$, the resulting function is a valid min-tradeoff function.
In the numerical optimisation, the choice of $q_{\text{dual}}$ is optimised.
For simplicity, we optimise only over $q_{\text{dual}}$ generated from the same probability distribution as the model, with the same channel loss $\eta$ and amplitude $\alpha$.
The only parameter that could differ from the actual channel is $\chi_{\text{dual}}$, which we optimise over.

\end{document}